\documentclass[12pt]{article}
\usepackage[english]{babel}
\usepackage{graphicx}
\usepackage{framed}
\usepackage{color}
\usepackage[normalem]{ulem}
\usepackage{amsmath}
\usepackage{amsthm}
\usepackage{amssymb}
\usepackage{amsfonts}
\usepackage{enumerate}
\usepackage[utf8]{inputenc}
\usepackage[top=1 in,bottom=1in, left=1 in, right=1 in]{geometry}
\usepackage{hyperref}
\usepackage{natbib}
\usepackage{multirow}
\usepackage{caption}
\usepackage{subcaption}

\theoremstyle{definition}
\newtheorem{theorem}{Theorem}

\newtheorem{assumption}{Assumption}
\newtheorem{lemma}{Lemma}

\title{CoVaR under Asymptotic Independence}
\author{Zhaowen Wang\\School of Statistics and Data Science,\\Shanghai University of
International Business and Economics,\\Yutao Liu \\School of Statistics and Mathematics, \\Central University of Finance and Economics\\and \\Deyuan Li \\Department of Statistics and Data Science, School of Management,\\ Fudan University}

\date{}
\begin{document}\maketitle
\begin{abstract}
  Conditional value-at-risk (CoVaR) is one of the most important measures of systemic risk. It is defined as the high quantile  conditional on a related
variable being extreme, widely used in the field of quantitative risk management. In this work, we develop a semi-parametric methodology to estimate CoVaR for asymptotically independent pairs within the framework of bivariate extreme value theory. We use parametric modelling of the bivariate extremal structure to address data sparsity in the joint tail regions and prove consistency and asymptotic normality of the proposed estimator. The robust performance of the estimator is illustrated via simulation studies. Its application to the US stock returns data produces insightful dynamic CoVaR forecasts.
\end{abstract}
Key words: systemic risk; multivariate extreme value theory; asymptotic independence.
\section{Introduction}
\label{sec:intro}

The global financial crisis of 2007-2009 highlighted the crucial role systemic risk plays in ensuring stability of financial markets. Unlike idiosyncratic failures, systemic risk represents the danger that distress in a single institution can propagate through complex, interconnected networks, leading to a total collapse of market liquidity and function. The
interconnections among financial institutions highlight the need to better
understand how the risk of one financial institution
affects the risk of another. For more about financial systemic
risk, please refer to \cite{hansen2013challenges}, \cite{engle2018systemic} and \cite{jackson2021systemic}. 

An accurate measurement of systemic risk enables the regulators to implement comprehensive strategies to safeguard the entire financial system. It also provides institutions with the means to scrutinize their susceptibility to market fluctuations. Several measures of systemic risk have been proposed in the literature, including CATFIN designed by \cite{allen2012does}, SRISK developed by \cite{brownlees2017srisk}, the marginal expected shortfall introduced by \cite{acharya2017measuring}, a relative
risk measure constructed by \cite{he2019statistical} and so on. 

One popular measure of systemic risk is the conditional value-at-risk (CoVaR), proposed by \cite{adrian2011covar}. For two random loss variables $X$ and $Y$ and level $p$, $\operatorname{CoVaR}^=_{Y \mid X}(p)$ is defined as$$\mathbb{P}\left(Y \geq \operatorname{CoVaR}^=_{Y \mid X}(p) \mid X=\operatorname{VaR}_X(p)\right)=p, \quad p \in(0,1),$$where $\operatorname{VaR}_X(p)$ is the value-at-risk (VaR) for random variable $X$ at level $1-p$, given by
$$
\operatorname{VaR}_X(p)=\inf _x\{\mathbb{P}(X>x) \leq p\}.$$Typical values of interest for the risk level $p$ are $1 \%$ and $5\%$. While VaR measures the risk of individual institutions in isolation, \cite{adrian2011covar} show that CoVaR well captures the cross-sectional tail-dependency between the whole financial
system and a particular institution, and could predict
the 2007–2009 crisis. 

\cite{girardi2013systemic} generalize the definition of CoVaR by assuming that the conditioning distress event refers to exceedence $\{X \geq \operatorname{VaR}_X(p)\}$ as opposed to equality $\{X=\operatorname{VaR}_X(p)\}$. The definition of CoVaR is modified as
\begin{equation}\mathbb{P}\left(Y \geq \operatorname{CoVaR}_{Y \mid X}(p) \mid X \geq \operatorname{VaR}_X(p)\right)=p, \quad p \in(0,1).
\label{covar}
\end{equation}This change allows to consider more severe distress events that are farther in the tail (above its VaR) and to backtest CoVaR estimates using the standard approaches for backtesting VaR. \cite{mainik2014dependence} also show this definition leads to CoVaR being dependence consistent.  

Quantile regression techniques are often employed in estimating $\operatorname{CoVaR}^=_{Y \mid X}(p)$, see, e.g., \cite{adrian2011covar} and \cite{leng2024asymptotics}. \cite{huang2024monte} also develop a Monte Carlo algorithm to estimate CoVaR under this definition. On the other hand, the estimation methods for $\operatorname{CoVaR}_{Y \mid X}(p)$  are often fully parametric, including the GARCH model \citep{girardi2013systemic},  dynamic copula model \citep{oh2018time} and \citep{bianchi2023non}. \cite{nolde2020conditional} and \cite{nolde2022extreme} develop semiparametrical approaches based on extreme value theory.

In the present paper, we interpret $X$ and $Y$ as losses of a financial institution and a system proxy. We note that the approach in \cite{nolde2022extreme} is based on the assumption that $X$ and $Y$ are asymptotic dependent. If $X$ and $Y$ are asymptotic independent, this approach fails. Although most research articles in bivariate extreme value framework deal with asymptotic dependence, there is increasing evidence that weaker dependence actually exists in bivariate tail region in many applications, for example, significant wave height \citep{wadsworth2012dependence}, spatial precipitation \citep{le2018dependence} and daily stock prices \citep{lehtomaa2020asymptotic}. Asymptotic independence is therefore the more appropriate model for such applications. In the field of quantitative risk management, there is also growing attention on risk measures for asymptotically independent pairs \citep[see][etc]{kulik2015heavy,das2018risk,cai2020estimation,sun2022extreme,wang2023tail}. 
In this paper, we will study the estimation of $\operatorname{CoVaR}_{Y \mid X}(p)$ under asymptotic independence. We refer to \cite{ledford1996statistics} for the concepts of asymptotic independence and asymptotic dependence.

To incorporate asymptotic independence in statistical modelling, we adopt the condition on the joint exceedance probabilities by
\cite{ledford1997modelling}, which assumes that there exists $0<\eta\leq1$, such as for all $x, y \in[0, \infty)$, for following limit exist
\begin{equation}
\lim _{p\to 0} p^{-\frac{1}{\eta}} \mathbb{P}\left(F_1(X) \geq 1-p x, F_2(Y) \geq 1-p y\right)=:c(x, y)\in[0, \infty),
\label{1}
\end{equation}
where $F_1$ and $F_2$ are the distribution functions of random variables X and Y, respectively. One important property of function $c$ is that it is homogeneous of order $1/\eta$, i.e., $c(ax,ay) =a^{1/\eta} c(x,y)$ for $a>0$. The coefficient of tail dependence $\eta$ describes the strength of extremal dependence in the bivariate tail. If $\eta=1$, we say that $X$ and $Y$ are asymptotically dependent, and $c(x, y)$ is the tail dependence function. If $0<\eta<1$, we say that $X$ and $Y$ are asymptotically independent. Moreover, if $1/2<\eta<1$, $X$ and $Y$ are called asymptotically independent but positively associated; if $0<\eta<1/2$, $X$ and $Y$ are called asymptotically independent but negatively associated. When $X$ and $Y$ are independent, then $\eta=1/2.$ For more details on the interpretation of $\eta$ see \cite{ledford1996statistics} and \cite{ledford1997modelling}.

It is the goal of this paper to estimate  $\operatorname{CoVaR}_{Y \mid X}(p)$ under asymptotic independence. However, we exclude the nagative association and require that $1/2<\eta<1$ for the sake of identification.

The rest of the paper is organized as follows. Section \ref{sec2} proposes an estimator for $\operatorname{CoVaR}_{Y \mid X}(p)$ and studies its consistency and asymptotic nomality. The performance of our proposed estimator is illustrated by a simulation study in Section \ref{sim}, and a real application to S\&P 500 is given in Section \ref{sec4}. The proofs of the main theorems are provided in Section \ref{pf}. 


\section{Main results}
\label{sec2}\subsection{Probabilistic framework}
Recall the adjustment factor $\eta_{p}$ introduced in \cite{nolde2022extreme}
\begin{equation}
\eta_{p}=\frac{\mathbb{P}\left(Y \geq \operatorname{CoVaR}_{Y \mid X}(p)\right)}{\mathbb{P}\left(Y \geq \operatorname{CoVaR}_{Y \mid X}(p) \mid X \geq \operatorname{VaR}_X(p)\right)}, \quad p \in(0,1) .
\label{def}\end{equation}Hence CoVaR is related to a quantile of the unconditional distribution of $Y$ via
\begin{equation}
\operatorname{CoVaR}_{Y \mid X}(p)=\operatorname{VaR}_Y\left(p \eta_{p}\right)
\label{equiv}
\end{equation}with the adjusted quantile level $p \eta_{p}$. If $X$ and $Y$ are independent, then CoVaR coincides with the VaR of $Y$ at the same level and $\eta_{p}=1$. When $1/2<\eta\leq1$, we have $\eta_{p}<1$ and CoVaR is equal to $\mathrm{VaR}$ at a higher confidence level determined by $\eta_{p}$.

Going back to the definition of CoVaR in (\ref{covar}) and using (\ref{equiv}), we have
$$\mathbb{P}\left(X\geq\operatorname{VaR}_X(p), Y\geq\operatorname{VaR}_Y\left(p \eta_{p}\right)\right)=\mathbb{P}\left(1-F_1(X)\leq p, 1-F_2(Y)\leq p\eta_{p}\right)=p^{2}, 
$$Denote by $Q$ the joint distribution function of $\left(1-F_1(X), 1-F_2(Y)\right)$, and we can write\begin{equation}
Q\left(p, p\eta_{p}\right)=p^{2}.\label{qequa}\end{equation}
This suggests a possibility of approximating the true adjustment factor $\eta_{p}$ with an asymptotically determined approximation, denoted $\eta_{p}^*$, which is defined implicitly via
\begin{equation}
c\left(1, \eta_{p}^*\right)=p^{2-1/\eta}.\label{equa}
\end{equation}

Note that if $\eta=1$, then relation (\ref{equa}) is exactly the same as the case in \cite{nolde2022extreme}. In situations where function $c(x,y)$ provides a good approximation of the dependence structure in the tail region, we expect $\eta_{p}$ and $\eta_{p}^*$ to be close. We prove this formally in Section \ref{pf}. Theorem 1 in \cite{li2025properties} also presents similar results. Notice that a larger value of $\eta$, i.e. stronger tail dependence, will lead to a smaller value of $\eta_{p}^*$. 

As the tail dependence function is monotonically non-decreasing in each coordinate, it follows that $c(1, s)$ is an increasing function of $s$ from zero to the value of $c(1,1)$ for values of $s$ from zero to one. Hence, a unique solution $\eta_{p}^*$ to equation (\ref{equa}) exists provided that $0<p^{2-1/\eta}<c(1,1)$. In applications, $p$ will typically be taken to be small and so generally it will be possible to find the solution as long as dependence is not too close to the nagative association case. 

Thus $\operatorname{CoVaR}_{Y \mid X}(p)$ can by approximated by the quantile of the unconditional distribution $\operatorname{VaR}_Y\left(p \eta_{p}^*\right)$, and regular variation of $1-F_2$ can be used to give further approximation:
$$
\operatorname{CoVaR}_{Y \mid X}(p) \approx \operatorname{VaR}_Y\left(p \eta_{p}^*\right) \approx\left(\eta_{p}^*\right)^{-\gamma} \operatorname{VaR}_Y(p), 
$$where $\gamma$ is the extreme value index of the distribution of $Y$. The above expression will be used as a basis for constructing an asymptotically motivated estimator of CoVaR.

\subsection{Estimation}Suppose that $\left(X_1, Y_1\right), \ldots,\left(X_n, Y_n\right)$ are independent copies of $(X, Y)$. Define three intermediate sequences $k_1, k_2, k_3$, satisfying $k_i\to\infty$ and $k_i/n\to0,$ for $ i=1,2,3.$
Using ideas outlined above, we propose the following estimator of $\operatorname{CoVaR}_{Y \mid X}(p)$ for small values of $p$ :
\begin{equation}
\widehat{\operatorname{CoVaR}}_{Y \mid X}(p)=\left(\widehat{\eta}_p^*\right)^{-\widehat{\gamma}} \widehat{\operatorname{VaR}}_Y(p) .
\label{cov}\end{equation}

In order to estimate $\operatorname{CoVaR}_{Y \mid X}(p)$, we have to estimate ${\eta}_p^*, \gamma$ and $\operatorname{VaR}_Y(p)$. Extreme value index $\gamma$ of the distribution of $Y$ assumed to have a regularly varying tail can be estimated using the Hill estimator by \cite{hill1975simple}:
$$
\widehat{\gamma}=\frac{1}{k_1} \sum_{i=1}^{k_1} \log Y_{n, n-i+1}-\log Y_{n, n-k_1},
$$where $Y_{n,1}\leq Y_{n,2}\leq\cdots\leq Y_{n,n}$ are the order statistics of $\{Y_{1}, Y_{2},\cdots,Y_{n}\}$.

The estimator of $\operatorname{VaR}_Y(p)$ can be computed using the extrapolation method, see Section 4.3 in \cite{de2007extreme}:
\begin{equation}
\widehat{\operatorname{VaR}}_Y(p)=Y_{n, n-k_2}\left(\frac{k_2}{n p}\right)^{\widehat{\gamma}}.\label{var}
\end{equation}

Once we estimate $\eta_{p}^*$, we can build the estimator of  $\operatorname{CoVaR}$ by (\ref{cov}). Finding $\widehat{\eta}_p^*$ in (\ref{cov}) requires estimates of both the coefficient of tail dependence $\eta$ and the function $c(x,y)$. 
For $c(x,y)$, in light of data sparsity in the tail region, we impose a parametric assumption on its form so as to facilitate the computation of an estimate of $\eta_{p}$. Similar to the settings in \cite{lalancette2021rank}, we assume the function $c$ belongs to some parametric family $\{c(\cdot, \cdot ; \boldsymbol{\theta}): \boldsymbol{\theta} \in \Theta\}$, where $\Theta \subset \mathbb{R}^s(s \geq 1)$ is the parameter space.  We adopt the method-of-moments (M-estimator) proposed in \cite{lalancette2021rank} to estimate $\boldsymbol{\theta}$. It is an extention of \cite{einmahl2012m} and asymptotic normality hold under weak conditions.

Denote $\widehat{Q}_n(x, y)$ as an nonparametric estimator of $Q(x,y)$, defined by
$$
\widehat{Q}_n(x, y):=\frac{1}{n} \sum_{i=1}^n \mathbf{1}\left\{n \widehat{F}_1\left(X_i\right) \geq n+1-\lfloor k_3 x\rfloor, n \widehat{F}_2\left(Y_i\right) \geq n+1-\lfloor k_3 y\rfloor\right\},$$ where  $\widehat{F}_1$ and $\widehat{F}_2$ are the empirical marginal distributions of $F_1$ and $F_2$, respectively and $\lfloor a \rfloor$ is the largest integer less than or equal to $a$. Let $g=\left(g_1, \ldots, g_s\right)^T:[0,1]^2 \rightarrow \mathbb{R}^s$ be a vector of integrable functions and $\boldsymbol{\theta}_0$ denote the true value of parameter $\boldsymbol{\theta}$. Define
$$
\Phi_{n}(\boldsymbol{\theta}, \zeta)=\zeta\iint_{[0,1]^2} g(x, y) c(x, y ; \boldsymbol{\theta}) d x d y -\iint_{[0,1]^2} g(x, y) \widehat{Q}_n(x, y) d x d y,
$$and let \begin{equation}
  (\widehat{\boldsymbol{\theta}}, \widehat{\zeta}):=\text{arg min}_{\boldsymbol{\theta}\in\Theta,\zeta>0}\left\|\Phi_n(\boldsymbol{\theta}, \zeta)\right\|,\label{para}\end{equation}
where $\|\cdot\|$ is the Euclidean norm and $\zeta$ is an additional scale parameter. The choice of function $g$ is discussed further in Section \ref{sim}. 

Recall that $\eta$ can be recovered from the function $c$ since the latter is homogeneous of order $1/\eta$. Therefore, inside the assumed parametric model, $\eta$ can be represented as a function $\eta=\eta(\boldsymbol{\theta})$ and the estimator $\widehat{\eta}$ can be represented as $\widehat{\eta}=\eta(\widehat{\boldsymbol{\theta}})$.Once we have $\widehat{\boldsymbol{\theta}}$ and $\widehat{\eta},$ the estimator $\widehat{\eta}^*_p$ in  (\ref{cov}) can obtained by solving 
\begin{equation}
c\left(1, \widehat{\eta}^*_p ;\widehat{\boldsymbol{\theta}}\right)=p^{2-1/\widehat{\eta}}.
\label{hat}\end{equation}\subsection{Consistency}\label{consis}In this subsection we study the consistency of the proposed CoVaR estimator in (\ref{cov}). Denote the quantile function $U_2=\left(1 / 1-F_2\right)^{\leftarrow}$, where $^{\leftarrow}$ is the left-continuous inverse. Then clearly $\operatorname{VaR}_Y(p)=U_2(1 / p)$.  Define
$$
\widetilde{\Phi}(\boldsymbol{\theta}, \zeta)=\zeta\iint_{[0,1]^2} g(x, y) c(x, y ; \boldsymbol{\theta}) d x d y -\iint_{[0,1]^2} g(x, y) c(x, y ; \boldsymbol{\theta}_0)  d x d y.
$$

Firstly, we present three assumptions that are necessary for consistency of $\widehat{\operatorname{VaR}}_Y(p)$ in (\ref{var}); see, e.g., Theorem 4.3.8 in \cite{de2007extreme}. Assumption \ref{asm1} is a second order condition for 
the distribution of $Y$, which is commonly assumed in extreme value theory. Assumption \ref{asm2} is for the two intermediate sequences $k_j=k_j(n), j=1,2$, used in the estimator of VaR in (\ref{var}), and Assumption \ref{asm3} resticts the order of the probability level $p=p(n)$.

\begin{assumption}\label{asm1}
  There exist a constant $\rho<0$ and an eventually positive or negative function $A(t)$ such that as $t \rightarrow \infty, A(t) \rightarrow 0$ and for all $x>0$,
    $$
    \lim _{t \rightarrow \infty} \frac{\frac{U_2(t x)}{U_2(t)}-x^\gamma}{A(t)}=x^\gamma \frac{x^\rho-1}{\rho} .
    $$
\end{assumption}\begin{assumption}\label{asm2}As $n \rightarrow \infty, \sqrt{k_j} A\left(n / k_j\right)\rightarrow \lambda_j \in \mathbb{R},$ for $j=1,2$. \end{assumption} 
\begin{assumption}\label{asm3}
    As $n \rightarrow \infty, k_2/(n p)\rightarrow \infty $, and$ \sqrt{k_1}/\log \left(k_2 / (n p)\right)\rightarrow \infty.
    $\end{assumption}
    
    Next, we give Assumptions \ref{asm4} and \ref{asm4+}, which aim at controlling the speed of convergence to the limit in (\ref{1}).  Assumption \ref{asm4} also ensures that the approximation in (\ref{equa}) is reasonable. Assumption \ref{asm5} is taken from \cite{lalancette2021rank} to show the asymptotic normality of M-estimator $\widehat{\boldsymbol{\theta}}_n$.\begin{assumption}\label{asm4} 
    For $x, y\in \mathcal{S}^{+}=\left\{(x, y) \in[0, \infty)^2: x^2+y^2=1\right\}$, $$p^{-1/\eta}Q(px, py)=c(x, y)+O\left(q_1(p)\right), \quad x, y \in[0, \infty)$$holds uniformly, where  $q_1(p)=o(p^{2-1/\eta})$ as $p\to0$.
    \end{assumption}\begin{assumption}\label{asm4+}
      As $n \rightarrow \infty, m=m_n:=n(k_3/n)^{1/\eta} \rightarrow \infty $, and$ \sqrt{m}q_1\left(k_3 / n \right)\rightarrow 0.
      $\end{assumption}
\begin{assumption}\label{asm5}
 The function $\widetilde{\Phi}$ has a unique, well separated zero at $\left(\boldsymbol{\theta}_0, 1\right)$ and is differentiable at that point with Jacobian $J:=J_{\widetilde{\Phi}}\left(\boldsymbol{\theta}_0, 1\right)$ of full rank $s+1$.
\end{assumption}Lastly, we further impose one condition on the partial derivative of function $c(x,y; \boldsymbol{\theta})$. Define $c_y(x, y ; \boldsymbol{\theta}):=\partial c(x, y ; \boldsymbol{\theta}) / \partial y$. Assumption \ref{asm6} below ensures that $\eta_{p}^* \rightarrow 0$ with the same speed as $p\rightarrow 0$ for all functions $c(\cdot, \cdot ; \boldsymbol{\theta})$ in the parametric family.
\begin{assumption}\label{asm6}
For all $\boldsymbol{\theta} \in \Theta$, the partial derivative $c_y(x, y ; \boldsymbol{\theta})$ is continuous with respect to $y$ in the neighborhood of $(1,0 ; \boldsymbol{\theta})$ and $c_y(1,0 ; \boldsymbol{\theta})>0$.
\end{assumption}
    
Consider the CoVaR estimator defined in (\ref{cov}). Here, $\widehat{\gamma}$ is estimated by the Hill estimator; $\widehat{\operatorname{VaR}}_Y(p)$ is estimated using (\ref{var}) and $\widehat{\eta}_p^*$ is estimated with (\ref{hat}). The following theorem shows consistency of the CoVaR estimator. The proof is given in Section \ref{pf}.
\begin{theorem}\label{thm1}
  Under Assumptions \ref{asm1} to \ref{asm6}, as $n \rightarrow \infty$,$$\frac{\widehat{\operatorname{CoVaR}}_{Y \mid X}(p)}{\operatorname{CoVaR}_{Y \mid X}(p)} \stackrel{\mathbb{P}}{\rightarrow} 1.$$
\end{theorem}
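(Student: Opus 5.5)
We decompose the ratio multiplicatively as
\begin{equation*}
\frac{\widehat{\operatorname{CoVaR}}_{Y \mid X}(p)}{\operatorname{CoVaR}_{Y \mid X}(p)}
=\frac{(\widehat{\eta}_p^*)^{-\widehat{\gamma}}}{(\eta_p^*)^{-\gamma}}\cdot
\frac{\widehat{\operatorname{VaR}}_Y(p)}{\operatorname{VaR}_Y(p)}\cdot
\frac{(\eta_p^*)^{-\gamma}\operatorname{VaR}_Y(p)}{\operatorname{VaR}_Y(p\eta_p)},
\end{equation*}
using $\operatorname{CoVaR}_{Y\mid X}(p)=\operatorname{VaR}_Y(p\eta_p)$ from (\ref{equiv}). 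We then show that each factor tends to one. The second factor converges to $1$ in probability by Theorem 4.3.8 of \cite{de2007extreme} under Assumptions \ref{asm1}--\ref{asm3}. The third factor is deterministic. We first show $\eta_p/\eta_p^*\to1$. By (\ref{qequa}) and Assumption \ref{asm4} applied with $x=1$, $y=\eta_p$ (rescaled to the unit circle, using homogeneity), $c(1,\eta_p)=p^{2-1/\eta}+O(q_1(p))=p^{2-1/\eta}(1+o(1))$. Since $c(1,\eta_p^*)=p^{2-1/\eta}$, the mean value theorem together with Assumption \ref{asm6} ($c_y$ continuous near $(1,0)$ with $c_y(1,0)>0$, and $c(1,0)=0$) gives $\eta_p^*\sim p^{2-1/\eta}/c_y(1,0)$, and likewise $\eta_p\to 0$ at the same rate. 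The difference then satisfies $|\eta_p-\eta_p^*|\,c_y(1,\xi)=o(p^{2-1/\eta})$, so $\eta_p/\eta_p^*\to1$. Next, the uniform convergence theorem for regularly varying $U_2$ (Assumption \ref{asm1}) gives $U_2(1/(p\eta_p))/U_2(1/p)\sim \eta_p^{-\gamma}\sim(\eta_p^*)^{-\gamma}$. The one subtlety is that $\eta_p\to0$, so the ratio argument $1/\eta_p$ is unbounded. We handle this by writing $U_2(1/(p\eta_p))/U_2(1/(p\eta_p^*))\to1$, using $\eta_p/\eta_p^*\to1$ and the local uniformity of $U_2(tx)/U_2(t)\to x^\gamma$ for $x$ in compacts, with $t=1/(p\eta_p^*)$. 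We then compare $U_2(1/(p\eta_p^*))$ with $(\eta_p^*)^{-\gamma}U_2(1/p)$. This comparison requires Potter-type bounds. Since $\eta_p^*\to0$, $\eta_p^*$ behaves like a power of $p$, and this step may be delicate. If it fails, we would instead directly interpret the target as $(\eta_p^*)^{-\gamma}\operatorname{VaR}_Y(p)$, as the paper's approximation suggests, and argue via the second-order condition that the extrapolation error is $o(1)$.

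For the first factor, we write
\begin{equation*}
\frac{(\widehat{\eta}_p^*)^{-\widehat{\gamma}}}{(\eta_p^*)^{-\gamma}}=\exp\bigl\{-(\widehat\gamma-\gamma)\log\widehat\eta_p^*-\gamma\log(\widehat\eta_p^*/\eta_p^*)\bigr\}.
\end{equation*}
We need two facts: $\widehat\eta_p^*/\eta_p^*\to1$ in probability, and $(\widehat\gamma-\gamma)\log\eta_p^*\to0$ in probability. The second follows from $\sqrt{k_1}(\widehat\gamma-\gamma)=O_P(1)$ (Assumptions \ref{asm1}, \ref{asm2}) provided $\log(1/\eta_p^*)=O(\log(1/p))=o(\sqrt{k_1})$. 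Since $\eta_p^*$ is polynomial in $p$, this should follow from Assumption \ref{asm3}, possibly after noting $\log(k_2/(np))\asymp\log(1/p)$ in the relevant regime. For the first fact, Theorem 1 of \cite{lalancette2021rank} under Assumptions \ref{asm4+} and \ref{asm5} gives $\widehat{\boldsymbol\theta}-\boldsymbol\theta_0=O_P(m^{-1/2})$, hence $\widehat\eta-\eta=O_P(m^{-1/2})$. We then compare the defining equations (\ref{hat}) and (\ref{equa}) by linearizing $c(1,y;\boldsymbol\theta)\approx c_y(1,0;\boldsymbol\theta)\,y$ near $y=0$ (Assumption \ref{asm6}). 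This yields
\begin{equation*}
\frac{\widehat\eta_p^*}{\eta_p^*}\approx\frac{c_y(1,0;\boldsymbol\theta_0)}{c_y(1,0;\widehat{\boldsymbol\theta})}\,p^{1/\eta-1/\widehat\eta}.
\end{equation*}
The first fraction tends to $1$ by continuity in $\boldsymbol\theta$. This continuity is implicitly needed, and we would derive it from the differentiability in Assumption \ref{asm5} or add it as a mild regularity requirement.

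The main obstacle is the term $p^{1/\eta-1/\widehat\eta}=\exp\{O_P(m^{-1/2})\log(1/p)\}$, which tends to one only if $\log(1/p)=o(\sqrt m)$. We would try to derive this from Assumptions \ref{asm3} and \ref{asm4+}. Since $m=n(k_3/n)^{1/\eta}$ grows polynomially in $n$ while $p$ is controlled through $\log(k_2/(np))=o(\sqrt{k_1})$, a logarithmic growth of $1/p$ suffices. Failing a clean derivation, we would state this rate condition explicitly as the key requirement. Combining the three factors via Slutsky's lemma yields the theorem.
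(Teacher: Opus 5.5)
Your decomposition regroups the paper's $I_1\times I_2\times I_3\times I_4$, pivoting on $\eta_p^*$ instead of $\eta_p$. Otherwise you follow the paper: the mean value theorem with Assumption~\ref{asm4} for $\eta_p/\eta_p^*\to1$, Theorem~4.3.8 of de~Haan and Ferreira for the VaR factor, and Lalancette et al.\ for $\widehat{\boldsymbol\theta}$.

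\textbf{The deterministic factor.} This is the genuine gap, and you leave it open. Potter-type bounds cannot close it. Here $x=1/\eta_p^*$ grows like a positive power of $1/p$, so the $x^{\pm\delta}$ slack diverges. In fact first-order regular variation is not enough: for $U_2(t)=t^\gamma\log t$ one gets $\eta_p^{\gamma}U_2(1/(p\eta_p))/U_2(1/p)\to 3-1/\eta\neq1$. Your fallback of reinterpreting the target as $(\eta_p^*)^{-\gamma}\operatorname{VaR}_Y(p)$ is not admissible, because the theorem concerns $\operatorname{CoVaR}_{Y\mid X}(p)=\operatorname{VaR}_Y(p\eta_p)$ itself.

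The missing idea is that $\rho<0$ in Assumption~\ref{asm1} forces $t^{-\gamma}U_2(t)\to C\in(0,\infty)$. Equivalently, use the uniform second-order inequalities (de~Haan and Ferreira, Theorem~2.3.9) with $\delta<-\rho$. Then $U_2(tx)/(x^\gamma U_2(t))\to1$ uniformly in $x\ge1$, and
\[
\eta_p^{\gamma}\,\frac{\operatorname{VaR}_Y(p\eta_p)}{\operatorname{VaR}_Y(p)}=\frac{(p\eta_p)^{\gamma}\,U_2\bigl(1/(p\eta_p)\bigr)}{p^{\gamma}\,U_2(1/p)}\to\frac{C}{C}=1 .
\]
This is what the paper uses when it applies Assumption~\ref{asm1} at $t=1/p$, $x=1/\eta_p$ and gets the limit $-1/\rho$ because $x^\rho\to0$.

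\textbf{The rate and continuity conditions.} Your other hedges concern conditions that are necessary but not implied by Assumptions~\ref{asm1}--\ref{asm6}, so your attempted derivations cannot succeed.
\begin{itemize}
\item The Hill error enters as $(\widehat\gamma-\gamma)\log(1/\eta_p)$ with $\log(1/\eta_p)\sim(2-1/\eta)\log(1/p)$. This requires $\log(1/p)=o(\sqrt{k_1})$. Assumption~\ref{asm3} only controls $\log(k_2/(np))=\log(1/p)-\log(n/k_2)$.
\item The ratio $\widehat\eta_p^*/\eta_p^*$ contains the factor $p^{1/\eta-1/\widehat\eta}$. Given the normal limit of $\sqrt m(\widehat\eta-\eta)$, its exponent is of exact order $\log(1/p)/\sqrt m$, so $\log(1/p)=o(\sqrt m)$ is needed. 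Assumption~\ref{asm4+} only gives $m\to\infty$ at an arbitrary rate, so your claim that $m$ grows polynomially is not available.
\item Linearizing at the random $\widehat{\boldsymbol\theta}$ needs joint continuity of $c_y(1,y;\boldsymbol\theta)$ in $(y,\boldsymbol\theta)$ near $(0,\boldsymbol\theta_0)$. You need it both for $\widehat\eta_p^*\stackrel{\mathbb{P}}{\rightarrow}0$ and for the linearization error. It cannot be extracted from Assumption~\ref{asm5}, which is a statement about the integral functional $\widetilde\Phi$.
\end{itemize}

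The paper uses all three tacitly. Its $I_2$ step credits $\log\eta_p/\sqrt{k_1}\to0$ to Assumption~\ref{asm2}. The $\widehat\eta_p^*$ half of Lemma~\ref{lem1} is omitted exactly where $p^{1/\widehat\eta-1/\eta}$ and $c_y(1,\cdot\,;\widehat{\boldsymbol\theta})$ appear. So you were right to flag these points. The correct resolution is to add them as explicit hypotheses, not to argue that they follow. They do hold in the paper's example regime $k_i=n^\kappa$, $p=n^\zeta$.
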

\subsection{Asymptotic normality}In this subsection we present the asymptotic normality of $\widehat{\operatorname{CoVaR}}_{Y \mid X}(p)$ in (\ref{cov}) under some further assumptions. 
\begin{assumption}\label{asm7}For each given $\boldsymbol{\theta} \in \Theta$, there exists a function $\breve{\rho}(\boldsymbol{\theta})>0$, continuous at $\boldsymbol{\theta}=\boldsymbol{\theta}_0$, such that as $y \rightarrow 0$,
  $$
  \frac{c(1, y ; \boldsymbol{\theta})}{y}-c_y(1,0 ; \boldsymbol{\theta})=O\left(y^{\check{\rho}(\boldsymbol{\theta})}\right),
  $$and 
$c_y(1,0 ; \boldsymbol{\theta})$ is 1-Lipschitz continuous in the neighborhood of $\boldsymbol{\theta}_0.$ In addition, there exists a constant $\varepsilon>0$ such that $\sqrt{m}p^{\breve{\rho}(\boldsymbol{\theta}_0)(2-1/\eta)-\varepsilon}\to 0$, as $n\to\infty$.
  \end{assumption}

\begin{assumption}\label{asm8}
  As $n\to\infty, \frac{\sqrt{k_1}}{\log \left(k_2 / n p\right)}\max\{\frac{1}{\sqrt{k_2}}, \frac{1}{\sqrt{m}},\frac{q_1(p)}{p^{2-1/\eta}}\} \to 0.$
\end{assumption}
We remark that all of the above assumptions are compatible. For example, one can choose $k_1 =k_2=k_3=n^{\kappa}$ with $1-\eta<\kappa<1$ and $p=n^{\zeta}$ with $\zeta<\frac{-1/2-(\kappa-1)/(2\eta)}{\breve{\rho}(\boldsymbol{\theta}_0)(2-1/\eta)-\varepsilon}.$ The following theorem shows the asymptotic normality of the CoVaR estimator in (\ref{cov}). The proof is given in Section \ref{pf}.
 \begin{theorem}\label{thm2}
  Under Assumptions \ref{asm1} to \ref{asm8}, as $n \rightarrow \infty$,$$\frac{\sqrt{k_1}}{\log \left(k_2 / n p\right)}\left(\frac{\widehat{\operatorname{CoVaR}}_{Y \mid X}(p)}{\operatorname{CoVaR}_{Y \mid X}(p)}-1\right)\stackrel{d}{\rightarrow}\mathcal{N}(\frac{\lambda}{1-\rho},\gamma^2).$$
\end{theorem}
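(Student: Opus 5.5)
We decompose the ratio multiplicatively as
\begin{equation*}
\frac{\widehat{\operatorname{CoVaR}}_{Y\mid X}(p)}{\operatorname{CoVaR}_{Y\mid X}(p)}
=\underbrace{\frac{\widehat{\operatorname{VaR}}_Y(p)}{\operatorname{VaR}_Y(p)}}_{I_1}\cdot
\underbrace{\left(\frac{\widehat{\eta}_p^*}{\eta_p^*}\right)^{-\widehat{\gamma}}}_{I_2}\cdot
\underbrace{(\eta_p^*)^{\gamma-\widehat{\gamma}}}_{I_3}\cdot
\underbrace{\frac{(\eta_p^*)^{-\gamma}\operatorname{VaR}_Y(p)}{\operatorname{VaR}_Y(p\eta_p)}}_{I_4}.
\end{equation*}
This uses $\operatorname{CoVaR}_{Y\mid X}(p)=\operatorname{VaR}_Y(p\eta_p)$ from (\ref{equiv}). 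We show that only $I_1$ contributes at the rate $\sqrt{k_1}/\log(k_2/np)$, and every other factor is $1+o_P(\log(k_2/np)/\sqrt{k_1})$.

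For $I_1$ we invoke Theorem 4.3.8 of \cite{de2007extreme}, or rather its proof, which applies under Assumptions \ref{asm1}--\ref{asm3}. Writing $d_n=k_2/(np)$, we have $\log I_1=(\widehat\gamma-\gamma)\log d_n+O_P(1/\sqrt{k_2})+o(1)\cdot A(n/k_2)$. Multiplying by $\sqrt{k_1}/\log d_n$, the first term converges to $\mathcal N(\lambda_1/(1-\rho),\gamma^2)$ by the classical Hill asymptotics, so $\lambda=\lambda_1$. The $Y_{n,n-k_2}$ term vanishes by the $1/\sqrt{k_2}$ part of Assumption \ref{asm8}. The extrapolation bias term is of order $A(n/k_2)/\rho$, and after multiplication it is $O(\lambda_2\sqrt{k_1/k_2}/\log d_n)\to0$.

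For $I_3$, $\log I_3=(\gamma-\widehat\gamma)\log\eta_p^*$. By Assumption \ref{asm6} and homogeneity, $\eta_p^*\sim p^{2-1/\eta}/c_y(1,0)$, so $\log\eta_p^*=O(\log(1/p))$. Hence $\sqrt{k_1}\,|\widehat\gamma-\gamma|\,|\log \eta_p^*|/\log d_n = O_P(\log(1/p)/\log d_n)$. This is not automatically small, and is the step where I would need to check that it is absorbed. If needed, the product $I_1I_3$ should be handled jointly as $(\widehat\gamma-\gamma)\log(d_n/\eta_p^*)$, with $\log(d_n/\eta_p^*)/\log d_n\to1$ argued from the conditions.

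For $I_4$, Theorem/Lemma-type reasoning shows that $\eta_p/\eta_p^*=1+O(q_1(p)/p^{2-1/\eta})$. Indeed, (\ref{qequa}) gives $p^{-1/\eta}Q(p,p\eta_p)=p^{2-1/\eta}$, Assumption \ref{asm4} replaces this by $c(1,\eta_p)+O(q_1(p))$, and the mean value theorem with $c_y$ bounded below (Assumption \ref{asm6}, \ref{asm7}) inverts it. Combined with Assumption \ref{asm1} applied to $U_2(1/(p\eta_p))/U_2(1/p)$, this gives $I_4=1+O(q_1(p)/p^{2-1/\eta})+O(A(1/p))$, which is negligible by Assumption \ref{asm8}.

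The main obstacle is $I_2$. From \cite{lalancette2021rank}, Assumptions \ref{asm4+} and \ref{asm5} give $\sqrt m(\widehat{\boldsymbol\theta}-\boldsymbol\theta_0)=O_P(1)$, and hence $\widehat\eta-\eta=O_P(m^{-1/2})$. Using (\ref{hat}) and Assumption \ref{asm7}, we write $\widehat\eta_p^*=p^{2-1/\widehat\eta}(1+O(p^{\breve\rho(2-1/\eta)-\varepsilon}))/c_y(1,0;\widehat{\boldsymbol\theta})$, and similarly for $\eta_p^*$. The ratio then splits into three parts:
\begin{itemize}
\item $p^{1/\eta-1/\widehat\eta}=\exp(O_P(m^{-1/2}\log(1/p)))$;
\item $c_y(1,0;\boldsymbol\theta_0)/c_y(1,0;\widehat{\boldsymbol\theta})=1+O_P(m^{-1/2})$, by the Lipschitz condition;
\item remainder terms controlled by $\sqrt m\,p^{\breve\rho(2-1/\eta)-\varepsilon}\to0$, where the $\varepsilon$ absorbs continuity of $\breve\rho$ at $\boldsymbol\theta_0$.
\end{itemize}
Thus $\log I_2=O_P(\log(1/p)/\sqrt m)$. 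The delicate point is showing that this is $o_P(\log d_n/\sqrt{k_1})$, which would use the $1/\sqrt m$ part of Assumption \ref{asm8} together with the relative size of $\log(1/p)$ and $\log d_n$. I expect this comparison of logarithmic factors, shared with $I_3$, to be the crux. Slutsky's lemma combining the four factors then concludes the proof.
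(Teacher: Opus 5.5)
Your decomposition is essentially the paper's (the paper uses $\eta_p$ where you use $\eta_p^*$ and orders the factors differently). Your treatment of the VaR ratio and of the bias factor matches the paper's. The genuine gap is the step you flag yourself, and it cannot be closed for the theorem as stated. With $d_n=k_2/(np)$ and $k_2<n$ we have $\log(1/p)=\log d_n+\log(n/k_2)>\log d_n$, while Lemma \ref{lem1} gives $\log(1/\eta_p^*)=(2-1/\eta)\log(1/p)+O(1)$. Hence, for your $I_3=(\eta_p^*)^{\gamma-\widehat\gamma}$,
\[
\frac{\sqrt{k_1}}{\log d_n}\,\log I_3=\sqrt{k_1}\,(\widehat\gamma-\gamma)\,\frac{\log(1/\eta_p^*)}{\log d_n},\qquad \frac{\log(1/\eta_p^*)}{\log d_n}\ge 2-\frac{1}{\eta}+o(1),\quad 2-\frac{1}{\eta}>0 .
\]
This is a non-vanishing multiple of the same Gaussian $\sqrt{k_1}(\widehat\gamma-\gamma)$ that drives your $I_1$, and it enters with the same sign, so it is not $o_{\mathbb{P}}(1)$. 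Your fallback fails for the same reason: $\log(d_n/\eta_p^*)/\log d_n\ge 3-1/\eta+o(1)>1$, and this ratio even diverges if $\log d_n=o(\log(1/p))$. The leading term of the log-ratio is $(\widehat\gamma-\gamma)\log\bigl(k_2/(np\eta_p)\bigr)$, because $\operatorname{CoVaR}_{Y\mid X}(p)=\operatorname{VaR}_Y(p\eta_p)$ is an extrapolation to level $p\eta_p$, not $p$. The natural normalization is therefore $\sqrt{k_1}/\log\bigl(k_2/(np\eta_p)\bigr)$, not $\sqrt{k_1}/\log d_n$.

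The paper's proof does not fill this hole. It disposes of its factor $\eta_p^{\gamma-\widehat\gamma}$ by writing $1+O_{\mathbb{P}}(\log p/\sqrt{k_1})=1+o_{\mathbb{P}}(\log(k_2/np)/\sqrt{k_1})$, which is false by the inequality above. You are also right to isolate $p^{1/\eta-1/\widehat\eta}=\exp\bigl(O_{\mathbb{P}}(\log(1/p)/\sqrt{m})\bigr)$ in your $I_2$. The paper's Lemma \ref{lem3} silently drops this factor: it compares $p^{2-1/\eta}/\widehat\eta_p^*$ with $c_y(1,0;\widehat{\boldsymbol\theta})$, although $\widehat\eta_p^*$ solves (\ref{hat}) with $p^{2-1/\widehat\eta}$ on the right. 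Making this term negligible at your rate requires $\sqrt{k_1}\log(1/p)/(\sqrt{m}\log d_n)\to0$. Assumption \ref{asm8} does not provide this, since $\log(1/p)\to\infty$. At the corrected rate the term is $O_{\mathbb{P}}(\sqrt{k_1/m})$, so $k_1=o(m)$ would suffice. No sharper comparison of logarithms will finish your argument. What is needed is a corrected statement: normalization $\sqrt{k_1}/\log\bigl(k_2/(np\eta_p)\bigr)$, or a correspondingly rescaled limit law when $\log(1/p)/\log d_n$ converges, together with a stronger rate condition linking $k_1$ and $m$.
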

\subsection{Extensions}
In this subsection, we discuss some extensions of CoVaR definitions and propose corresponding estimation methods. \subsubsection{Estimation of CoVaR\texorpdfstring{$_{Y \mid X}(p, q)$}{_{Y \mid X}(p, q)}}Consider allowing the quantile levels of $X$ and $Y$ in CoVaR to be different. For two quantile levels $p, q \in(0,1)$, define $\operatorname{CoVaR}_{Y \mid X}(p, q)$by $$
\mathbb{P}\left(Y \geq \operatorname{CoVaR}_{Y \mid X}(p,q) \mid X \geq \operatorname{VaR}_X(p)\right)=q.
$$Several studies including \cite{huang2024monte} and \cite{mainik2014dependence} investigate the estimation and properties of $\operatorname{CoVaR}_{Y \mid X}(p, q)$. The adjustment factor $\eta_{p,q}$ for $\operatorname{CoVaR}_{Y \mid X}(p, q)$ can be defined as 
$$\eta_{p,q}=\frac{\mathbb{P}\left(Y \geq \operatorname{CoVaR}_{Y \mid X}(p,q)\right)}{\mathbb{P}\left(Y \geq \operatorname{CoVaR}_{Y \mid X}(p,q) \mid X \geq \operatorname{VaR}_X(p)\right)}.$$
Assume that $q=Cp$, where $C>0$ is a constant, we can get the approximate adjustment factor $\eta_{p,q}^*$ by solving the equation \begin{equation*}c\left(1, C\eta_{p,q}^*\right)=Cp^{2-\frac{1}{\eta}}\end{equation*}and the estimator of $\operatorname{CoVaR}_{Y \mid X}(p,q)$ is constructed via 
\begin{equation*}
\widehat{\operatorname{CoVaR}}_{Y \mid X}(p)=\left(\widehat{\eta}_{p,q}^*\right)^{-\widehat{\gamma}} \widehat{\operatorname{VaR}}_Y(p),\end{equation*} where $\widehat{\eta}_{p,q}^*$ is obtained by solving 
\begin{equation*}
c\left(1, C\widehat{\eta}^*_{p,q} ;\widehat{\boldsymbol{\theta}}\right)=Cp^{2-1/\widehat{\eta}}.
\end{equation*} 

The consistency and asymptotic normality of $\widehat{\operatorname{CoVaR}}_{Y \mid X}(p,q)$ can be established in the same manner as in Theorems \ref{thm1} and \ref{thm2}.
\subsubsection{Estimation of CoVaR\texorpdfstring{$^=_{Y \mid X}(p)$}{^=_{Y \mid X}(p)} and CoVaR\texorpdfstring{$^=_{Y \mid X}(p, q)$}{^=_{Y \mid X}(p, q)}}Consider the original definition of CoVaR proposed by \cite{adrian2011covar}, i.e., $$\mathbb{P}\left(Y \geq \operatorname{CoVaR}^=_{Y \mid X}(p) \mid X=\operatorname{VaR}_X(p)\right)=p,$$ where the adjustment factor $\eta^=_p$ is defined by \begin{equation*}
\eta^=_p=\frac{\mathbb{P}\left(Y \geq \operatorname{CoVaR}^=_{Y \mid X}(p)\right)}{\mathbb{P}\left(Y \geq \operatorname{CoVaR}^=_{Y \mid X}(p) \mid X =\operatorname{VaR}_X(p)\right)}, \quad p \in(0,1) .
\end{equation*} 
Define the partial derivative $c_x(x, y ; \boldsymbol{\theta}):=\partial c(x, y ; \boldsymbol{\theta}) / \partial x$. We follows Theorem 2 and 3 in \cite{li2025properties}, and define the approximate adjustment factor $\eta^{*=}_p$ by the equation \begin{equation*}
c_x\left(1, \eta^{*=}_{p}; \boldsymbol{\theta}\right)=p^{2-1/\eta}.
\end{equation*}The estimators $\widehat{\eta}_{p}^{*=}$ and $\widehat{\operatorname{CoVaR}^=}_{Y \mid X}(p)$ can be similarly obtained.

For $p, q \in(0,1)$, define  $\operatorname{CoVaR}^=_{Y \mid X}(p, q)$ by, $$\mathbb{P}\left(Y \geq \operatorname{CoVaR}^=_{Y \mid X}(p,q) \mid X=\operatorname{VaR}_X(q)\right)=p.$$ Thus the adjustment factor $\eta^=_{p,q}$ for $\operatorname{CoVaR}^=_{Y \mid X}(p, q)$ can be defined as 
$$\eta^=_{p,q}=\frac{\mathbb{P}\left(Y \geq \operatorname{CoVaR}^=_{Y \mid X}(p,q)\right)}{\mathbb{P}\left(Y \geq \operatorname{CoVaR}^=_{Y \mid X}(p,q) \mid X= \operatorname{VaR}_X(p)\right)}.$$
Assume that $q=Cp$, where $C>0$ is a constant, by Theorems 2 and 3 in \cite{li2025properties}, we get the approximate adjustment factor $\eta_{p,q}^{*=}$ by solving the equation \begin{equation*}c_x\left(1, C\eta_{p,q}^{*=}\right)=Cp^{2-\frac{1}{\eta}}. \end{equation*}The estimators $\widehat{\eta}_{p,q}^{*=}$ and $\widehat{\operatorname{CoVaR}^=}_{Y \mid X}(p,q)$ can be similarly obtained. 

\section{Simulation}\label{sim}
We assess the finite sample properties of the proposed $\operatorname{CoVaR}$ estimator in (\ref{cov}) in comparison with the naive estimator based on the order statistic. The simulation is based on 1000 repetitions and samples of size $n=5000$ at risk level $p=5 \%$. To make our estimation procedure automatic for the purpose of simulation studies, we set $k=k_1=k_1=k_3$ and take two values $k=1000$ and $k=1500$, respectively. The samples are simulated from the following two models:

Model 1: Let $Z_1, Z_2$, and $Z_3$ be independent Pareto random variables with parameters $\theta_1, \theta_2$, and $\theta_1$, respectively. Here, a Pareto distribution with parameter $\theta>0$ means that the probability density function is $f(x)=\theta^{-1}x^{-1 /\theta-1}$ for $x>1$. Define
$$
(X, Y)=B\left(Z_1, Z_3\right)+(1-B)\left(Z_2, Z_2\right),
$$
where $B$ is a $\text{Bernoulli}(1/2)$ random variable independent of $Z_i$'s. By \cite{cai2020estimation}, we have $$c(x, y)=2^{\theta_1 /\theta_2-1}(x \wedge y)^{\theta_1 /\theta_2},\quad\eta_{p}^*=p^{\frac{2\theta_2}{\theta_1}-1}2^{\frac{\theta_2}{\theta_1}-1}, \quad\eta=\theta_2/\theta_1, \quad q_1(p)=p^{\frac{\theta_1}{\theta_2}-1}.$$ We note that Model 1 satisfies Assumptions \ref{asm4} as long as $\theta_1/\theta_2>3/2$. Meanwhile, to ensure $1/2<\eta<1$, we need $3/2<\theta_1/\theta_2<2$.

Model 2: Let $(X, Y)$ follow the inverted Hüsler-Reiss distribution with unit Fréchet margins. Here, \begin{equation}Q(x, y)=\exp(-\ell(-\log x,-\log y)), \quad x, y \in[0, 1],
  \label{inv}
  \end{equation}holds with $$
  \ell(x, y)=x \Phi\left(\lambda+\frac{\log x-\log y}{2 \lambda}\right)+y \Phi\left(\lambda+\frac{\log y-\log x}{2 \lambda}\right)
  $$
  where $\Phi$ is the standard normal distribution function and $\lambda \in[0, \infty]$ parametrizes between perfect independence $(\lambda=\infty)$ and dependence $(\lambda=0)$. Let $\theta=\Phi(\lambda)$ and the parameter space  $\Theta=(1 / 2,1]$. By \cite{lalancette2021rank}, we have $$c(x, y)=(x y)^\theta,\quad\eta_{p}^*=p^{\frac{2-2\theta}{\theta}},\quad \eta=1/2\theta,\quad q_1(p)=1/\log(1/p).
$$ We note that in this model $q_1(p)=o(p^{2-1/\eta})$ does not hold and Assumption \ref{asm4} is not satisfied.

As for the weight function $g$, we consider 
\begin{equation}
  g(x, y):=\left(\mathbf{1}\left\{(x, y) \in I_1\right\} / a_{1, \theta_{\mathrm{REF}}}, \ldots, \mathbf{1}\left\{(x, y) \in I_5\right\} / a_{5, \theta_{\mathrm{REF}}}\right).
\label{wt}
\end{equation} In Model 1, we let $I_1=[0,1]^2$, $I_2=[0,0.8]\times[0,1], I_3=[0,1]\times[0,0.5], I_4=[0,0.5] \times[0,0.3]$ and $I_5=[0,0.5] \times[0,0.5]$. In Model 2, we let $I_1=[0,1]^2$, $I_2=[0,2]^2, I_3=[1 / 2,3 / 2]^2, I_4=[0,1] \times[0,3]$ and $I_5=[0,3] \times[0,1]$, which follows \cite{lalancette2021rank}. Here, $a_{j, \theta_{\mathrm{REF}}}:=\int_{I_j} c_{\theta_{\mathrm{REF}}}(x,y) d xdy$ and $\theta_{\mathrm{REF}}$ is simply a reference point in the parameter space that ensures that all components of $g$ have comparable magnitude. In the two models above, the reference points are $(0.6,0.6)$ and $0.6$, respectively. 

The settings of the simulation studies are summarized in Table \ref{tbsim}. The true value of $\operatorname{CoVaR}_{Y \mid X}(p)$ is computed by solving the adjustment factor $\eta_{p}$ in (\ref{qequa}) and then plugging in (\ref{equiv}). For each model, we select three different parameters. The mean and standard deviation of the proposed CoVaR estimators for the two intermediate levels, as well as the naive estimator, are also reported in Table \ref{tbsim}. \begin{table}[]\centering\caption{True values of $\operatorname{CoVaR}_{Y \mid X}(p)$ and means of the estimated values with corresponding standard deviation given in the brackets and the naive estimator at level $p = 0.05$ .}\label{tbsim}
  \begin{tabular}{cccccl}
    \hline
                          & $\theta$     & True value & $k=1000$     & $k=1500$    & Naive        \\ \hline
    \multirow{3}{*}{Model 1} & (0.85, 0.45) & 11.17      & 13.76(1.88)  & 14.30(1.42) & 18.26(3.57)  \\
                          & (0.80, 0.42)  & 9.52       & 11.64(1.50)   & 12.08(1.12)  & 15.22(2.80)  \\
                          & (0.75, 0.40)  & 8.55       & 10.20(1.23)   & 10.55(0.92)  & 13.06(2.22)  \\ \hline
    \multirow{3}{*}{Model 2} & 0.91         & 35.54      & 38.42(11.29) & 40.21(8.17) & 40.49(14.46) \\
                          & 0.93         & 30.85      & 32.59(9.34)  & 33.74(6.70) & 35.30(11.74) \\
                          & 0.95         & 26.90      & 28.85(8.16)  & 29.03(5.79) & 31.10(10.54)\\\hline
    \end{tabular}
  \end{table}

  We can make the following conclusions from Table \ref{tbsim}. First, our proposed estimator in (\ref{cov}) is more accurate than the naive estimator in terms of both bias and standard deviation even when the model assumptions are partially violated in Model 2. Second, in both Model 1 and Model 2, while a larger $k$ decreases the variance, a smaller $k$ can decrease the bias, which is a common tradeoff in the studies of extreme value statistics. Third, 
  the standard deviation in Model 1 is significantly smaller than in Model 2, as Model 2 fails to meet all necessary technical conditions. Last but not least, the proposed estimator tends to overestimate the true value, a pattern also observed under asymptotic dependence in \cite{nolde2022extreme}. From the applied perspective, the proposed estimation procedure offers a conservative estimator of systemic risk as measured by CoVaR.
\section{Application}\label{sec4}In this section, we illustrate how the CoVaR estimation methodology can be utilized to produce dynamic CoVaR forecasts using financial time series. 
\subsection{Data description}
In our application, we employ the same dataset as \cite{nolde2022extreme} and consider the daily losses of S\&P 500 index and 15 financial institutions with market capitalizations in excess of 5 billion USD as of the end of June 2007, including Aflac Inc. (AFL), American International Group Inc. (AIG), Allstate Corp. (ALL), Bank Of America Corp. (BAC), Citigroup Inc. (C), Comerica Inc. (CMA), Humana Inc. (HUM), JPMorgan Chase \& Co. (JPM), Lincoln National Corp. (LNC), Progressive Corp. (PGR), USA Education Inc. (SLM), Travelers Companies Inc. (TRV), UnumProvident Corp. (UNM), Wells Fargo \& Co. (WFC) and Washington Mutual Inc. (WM). The S\&P 500 index is used as a system proxy. The sample period is from January 1, 2000 to December 30, 2021, consisting of $n=5535$ daily closing price records for each time series. The daily losses (\%) were calculated as negative log returns.  
\subsection{CoVaR estimation in the dynamic setting}


Let $\left\{X_t^i\right\}_{t \in \mathbb{N}}$ and $\left\{Y_t\right\}_{t \in \mathbb{N}}$ denote time series of daily losses for institution $i$ and system proxy adapted to the filtrations $\mathcal{F}^{X_i}=\left\{\mathcal{F}_t^{X_i}\right\}_{t \in \mathbb{N}}$ and $\mathcal{F}^Y=\left\{\mathcal{F}_t^Y\right\}_{t \in \mathbb{N}},$ respectively, for $i=1,\cdots, 14$. To produce dynamic forecasts, we next define conditional versions of risk measures at time $t$ given information in the series up to time $t-1$. The (conditional)
VaR at confidence level $p\in(0,1)$ for $X_t^i$ given information on the institution's losses up to time $t-1$, denoted $\operatorname{VaR}_{X_t^i}(p)$, is defined as the $\left(1-p\right)$-quantile of the distribution of $X_t^i$ conditional on $\mathcal{F}_{t-1}^{X_i}$ :
$$
\mathbb{P}\left(X_t^i \geq \operatorname{VaR}_{X_t^i}(p)\mid \mathcal{F}_{t-1}^{X_i}\right)=p,
$$
and $\operatorname{CoVaR}_{Y_t\mid X_t^i}(p)$ is defined as the $\left(p\right)$-quantile of the conditional loss distribution given information on losses up to time $t-1$ for both the institution and the system proxy:
$$
\mathbb{P}\left(Y_t \geq \operatorname{CoVaR}_{Y_t\mid X_t^i}(p)\mid X_t^i \geq \operatorname{VaR}_{X_t^i}(p); \mathcal{F}_{t-1}^{X_i}, \mathcal{F}_{t-1}^Y\right)=1-p.
$$

We note that the proposed approach based on extreme value theory is static in nature. Since there is significance evidence of time changing correlation structure in the data, we use a bivariate time series filter with the static estimate for the realized residuals. This is in the spirit of the approach advocated by both \cite{girardi2013systemic} and \cite{nolde2020conditional}, where the authors adopted a bivariate AR(1)-GARCH(1,1) model with the  Dynamic conditional correlation (DCC) specification in \cite{engle2002dynamic}, and use a standardized bivariate skew-$t$ distribution to model innovations. The assumption of no serial correlation has been validated by sample autocorrelation function (acf) plots and Ljung–Box test in \cite{nolde2022extreme}.

The estimation procedure can be summarized in two steps:

\textbf{Step 1} (Bivariate DCC-GARCH model estimation): Assuming the \cite{engle2002dynamic} DCC-GARCH model, the joint losses $\mathbf{X}_t^i=\left(X_t^i, Y_t\right)$ satisfy the following specification:
$$
\begin{aligned}
\mathbf{X}^i_t & =\boldsymbol{\mu}^i_t+\boldsymbol{\epsilon}_t^i, \\
\boldsymbol{\epsilon}_t^i & =(\Sigma_t^i)^{1 / 2} \mathbf{Z}^i_t,
\end{aligned}
$$where $\boldsymbol{\mu}^i_t=\left(\mu_t^i, \mu_t^Y\right)$ is the vector of conditional means and $$\Sigma_t^i=\left( \begin{array}{cc}
  \sigma^2_{t, i}  & \rho_{t,i}\sigma_{t, i}\sigma_{t, Y}  \\\rho_{t,i}\sigma_{t, i}\sigma_{t, Y} & \sigma^2_{t, Y} 
  \end{array} \right)$$ is the conditional covariance matrix of the error term $\boldsymbol{\epsilon}_t^i$. The standardized innovation vectors $\mathbf{Z}_t^i=\left(Z_t^i, Z_t^Y\right)$ are iid with zero mean and identity covariance matrix. Parameters of the process are estimated using R package \textit{bayesDccGarch} and assuming the standardized bivariate skew-$t$ distribution of \cite{bauwens2006multivariate}. The residuals $$\widehat{\mathbf{Z}}_t^i:=\left(\widehat{Z}_t^i, \widehat{Z}_t^Y\right)=(\widehat{\Sigma}_t^i)^{-1 / 2} (\mathbf{X}^i_t -\widehat{\boldsymbol{\mu}}^i_t)$$ could be used as proxies for realized innovations.

\textbf{Step 2} (Dynamic CoVaR estimation): By the definition of CoVaR, the estimate of CoVaR at level $p$ and time $t$ for institution $i$ and system proxy $Y$ is given by
\begin{equation}
\widehat{\operatorname{CoVaR}}_{Y_t\mid X_t^i}(p)=\widehat{\mu}_t^Y+\widehat{\sigma}_{t, Y} \widehat{\operatorname{CoVaR}}_{\widehat{Z}_t^Y\mid \widehat{Z}_t^i}(p).
\label{dycov}\end{equation}

We would like to point out that \cite{nolde2022extreme} only applys the GARCH filters marginally because filtering out correlation led to weaker dependence in tail regions, invalidating the assumption of asymptotic dependence assumption in their article, while our new approach is able to deal with asymptotic independence. 
\subsection{Results}
In this subsection, we perform a dynamic analysis to assess accuracy of out-of-sample forecasts of CoVaR at risk level $p=5\%$. 
We use a rolling window of 3000 data points to estimate model parameters in Step 1 and investigate the asymptotic dependence structure for each pair of $\widehat{\mathbf{Z}}_t^i=\left(\widehat{Z}_t^i, \widehat{Z}_t^Y\right), i=1, \cdots, 14$, before producing one-day ahead CoVaR forecasts in Step 2. Here, we apply the Tail Quotient Correlation Coefficient (TQCC) test in \cite{zhang2017random} to test the null hypothesis of asymptotic independence, which corresponds to the case $\eta\in(0,1)$. 

To conduct the test, we first fit generalized extreme value distribution to each series and perform marginal transformations.  We apply TQCC to the transformed data, and choose the threshold as the smaller one of the two empirical 95th percentiles. For more details on TQCC-test, we refer to \cite{zhang2017random}. The computed TQCC measures and $p$-values are summarized in Table \ref{tb1}. We can find that while there are four institutions AFL, HUM, SLM and UNM that reject the null hypothesis of asymptotic independence with $p$-value below 0.01, the other eleven institutions (AIG, ALL, BAC, C, CMA, JPM, LNC, PGR, TRV, WFC, WM) demonstrate an asymptotic independence structure with the system proxy. We proceed with Step 2 for the eleven institutions. To reduce computational time, CoVaR estimates based on the samples of realized innovations are updated only every 50 observations. 
\begin{table}[]
  \centering
  \caption{TQCC, $p$-value, average quantile scores of CoVaR forecasts}\begin{tabular}{cccccc}
    \hline
    Ticker & TQCC & $p$-value & Model 1 & Model 2 \\ \hline
    AFL    & 0.72 & 0.00      &    &  \\
    AIG    & 0.00 & 1.00      & 1.01    & 0.74    \\
    ALL    & 0.00 & 1.00      & 1.10    & 0.90    \\
    BAC    & 0.00 & 0.76      & 1.19    & 0.94    \\
    C      & 0.00 & 1.00      & 1.17    & 0.97    \\
    CMA    & 0.00 & 1.00      & 1.02    & 0.89    \\
    HUM    & 0.63 & 0.00      &  &    \\
    JPM    & 0.00 & 1.00      & 1.20    & 0.95    \\
    LNC    & 0.00 & 1.00      & 1.23    & 1.01    \\
    PGR    & 0.00 & 1.00      & 0.97    & 0.77    \\
    SLM    & 0.80 & 0.00      &   &     \\
    TRV    & 0.00 & 0.99      & 1.04    & 0.95    \\
    UNM    & 0.88 & 0.00      &   &    \\
    WFC    & 0.00 & 1.00      & 1.04    & 0.87    \\
    WM     & 0.00 & 1.00      &  1.10  & 0.85   \\ \hline
\end{tabular}
\label{tb1}
\end{table}
  
Estimation of $\operatorname{CoVaR}_{\widehat{Z}_t^Y\mid \widehat{Z}_t^i}(p)$ requires choosing suitable values for sample fractions $k_1, k_2$ and $k_3$. We select $k_1=150$ for the tail index of $\widehat{Z}_t^Y$ by the Hill plot and $k_2=250$ from the stable region of the sensitivity analysis of $\widehat{\operatorname{VaR}}_{\widehat{Z}_t^Y}(p)$ to values of $k_2$. To choose $k_3$, we plot the relationship between $k_3$ values and the corresponding estimates for $\eta$ for all institutions and the index in Figure \ref{fig1}. It shows that the curves seems to be stable for a wide range of values. We select $k_3 =400$ for the eleven institutions.

\begin{figure}[htbp]
    \centering
    
    \begin{subfigure}[b]{0.3\textwidth}
        \includegraphics[width=\textwidth]{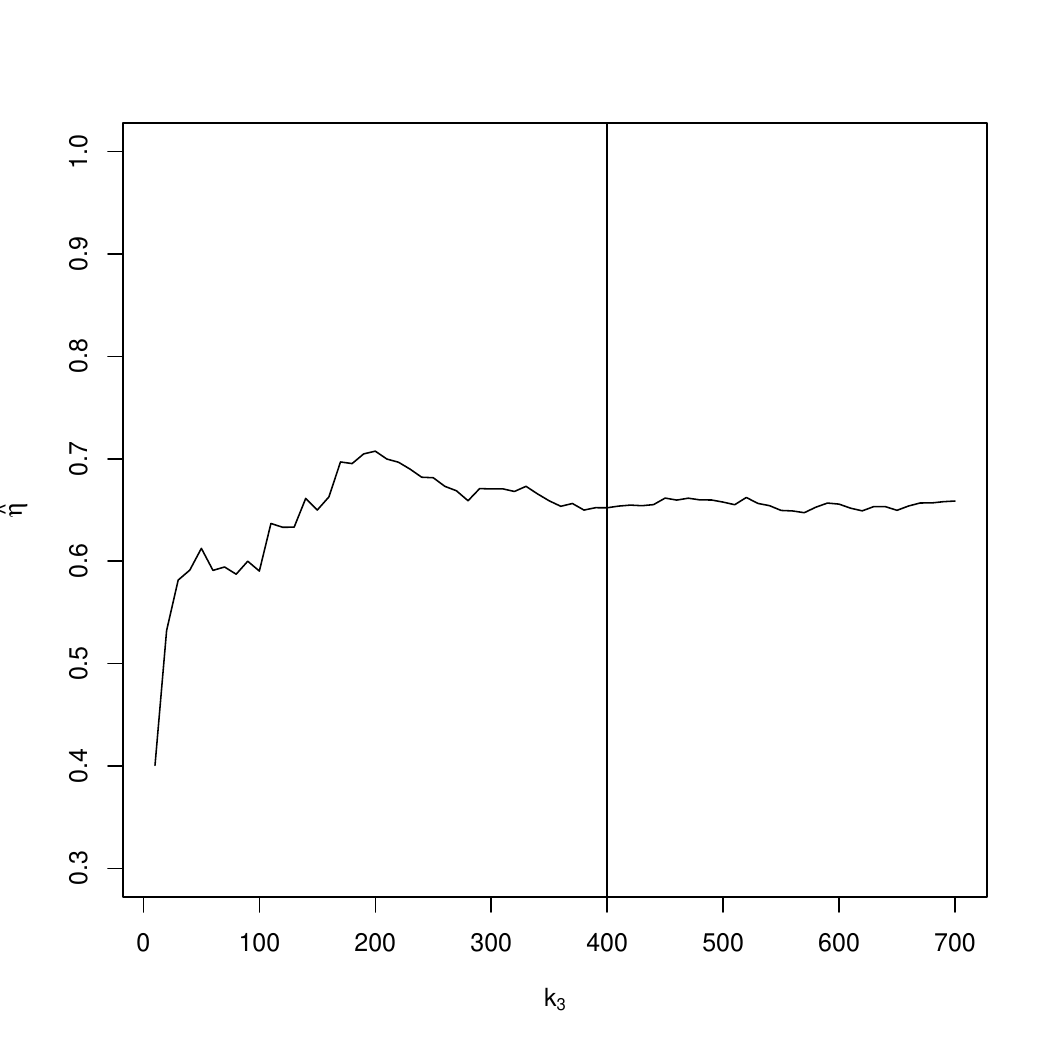}
        \caption{AIG}
    \end{subfigure}
    \hfill
    \begin{subfigure}[b]{0.3\textwidth}
        \includegraphics[width=\textwidth]{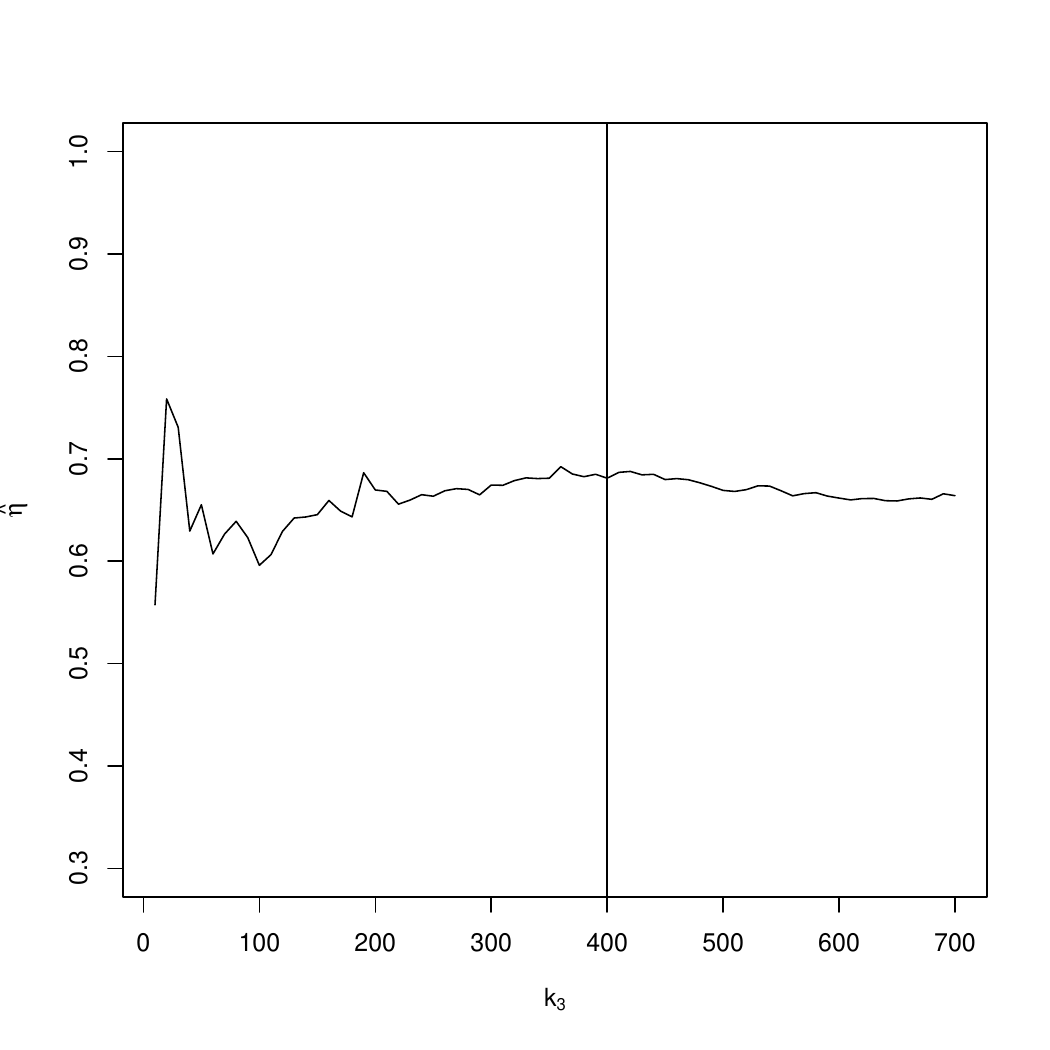}
        \caption{ALL}
      \end{subfigure}
        \hfill
        \begin{subfigure}[b]{0.3\textwidth}
          \includegraphics[width=\textwidth]{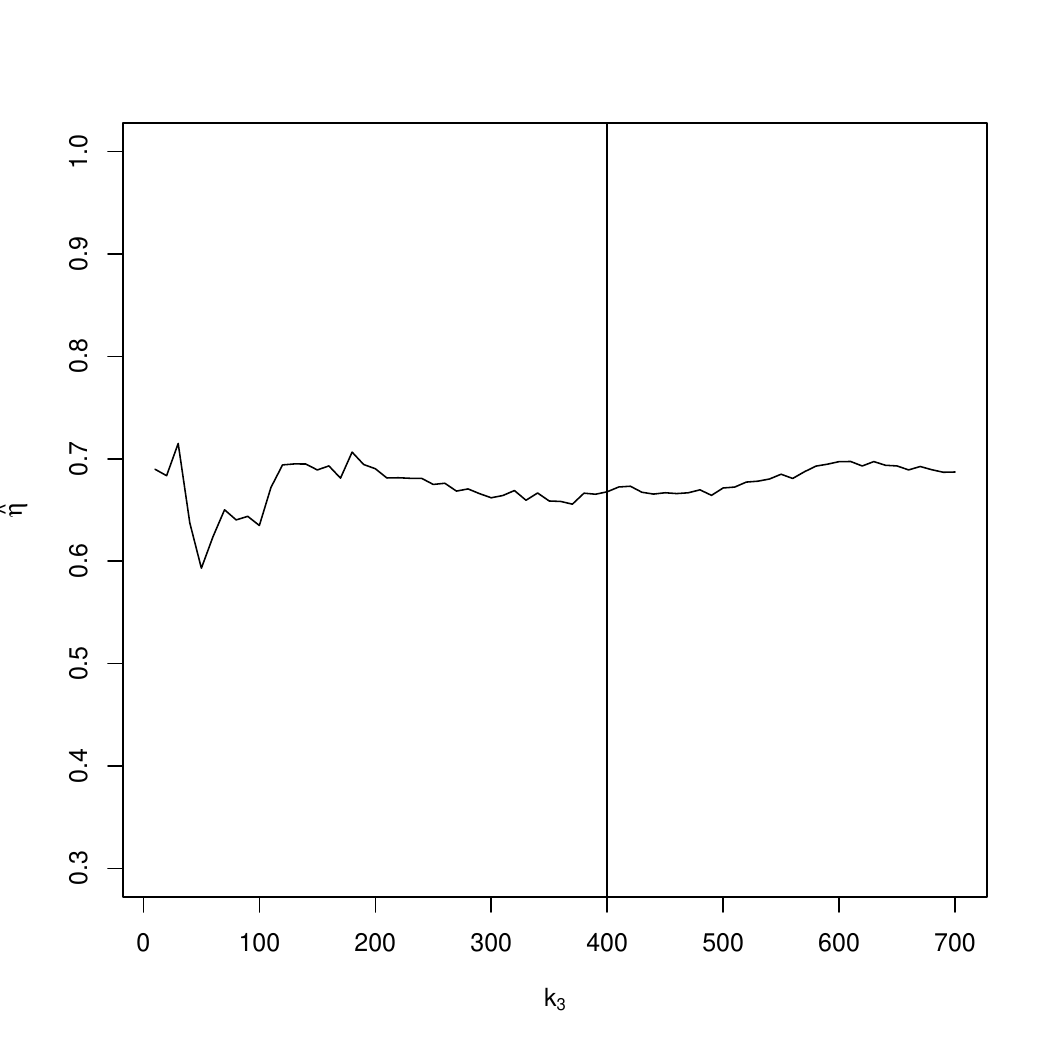}
          \caption{BAC}
        \end{subfigure}
          \hfill
          \begin{subfigure}[b]{0.3\textwidth}
            \includegraphics[width=\textwidth]{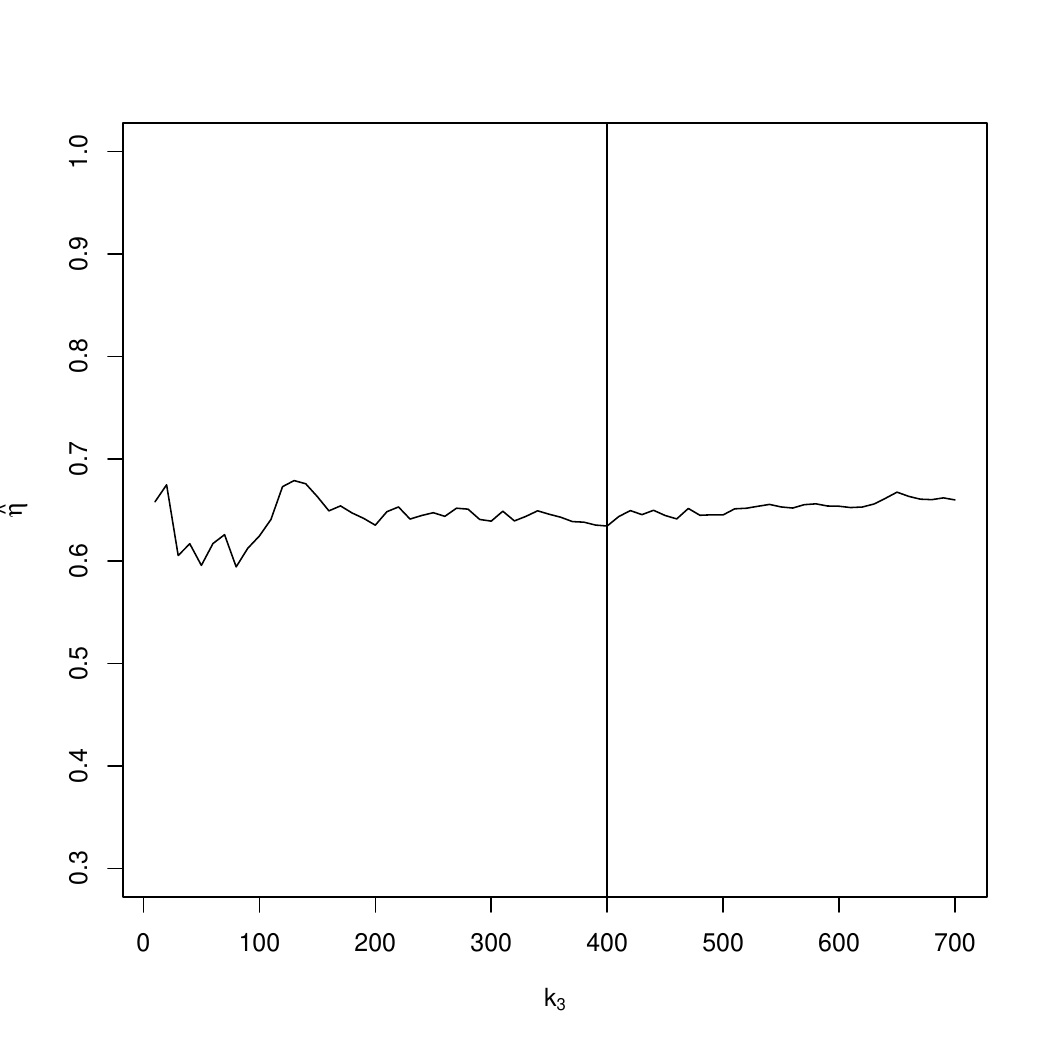}
            \caption{C}
          \end{subfigure}
            \hfill
            \begin{subfigure}[b]{0.3\textwidth}
              \includegraphics[width=\textwidth]{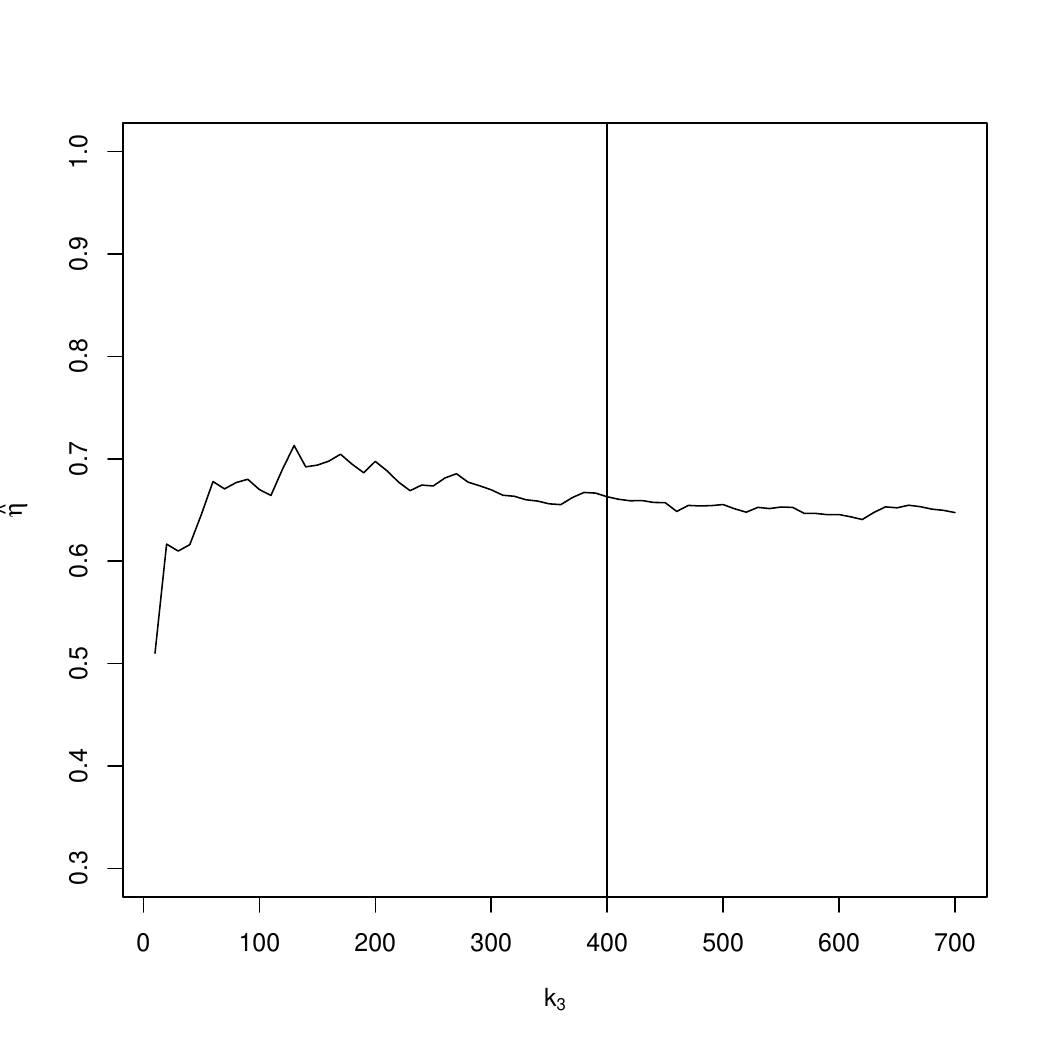}\caption{CMA}
            \end{subfigure}
              \hfill
              \begin{subfigure}[b]{0.3\textwidth}
                \includegraphics[width=\textwidth]{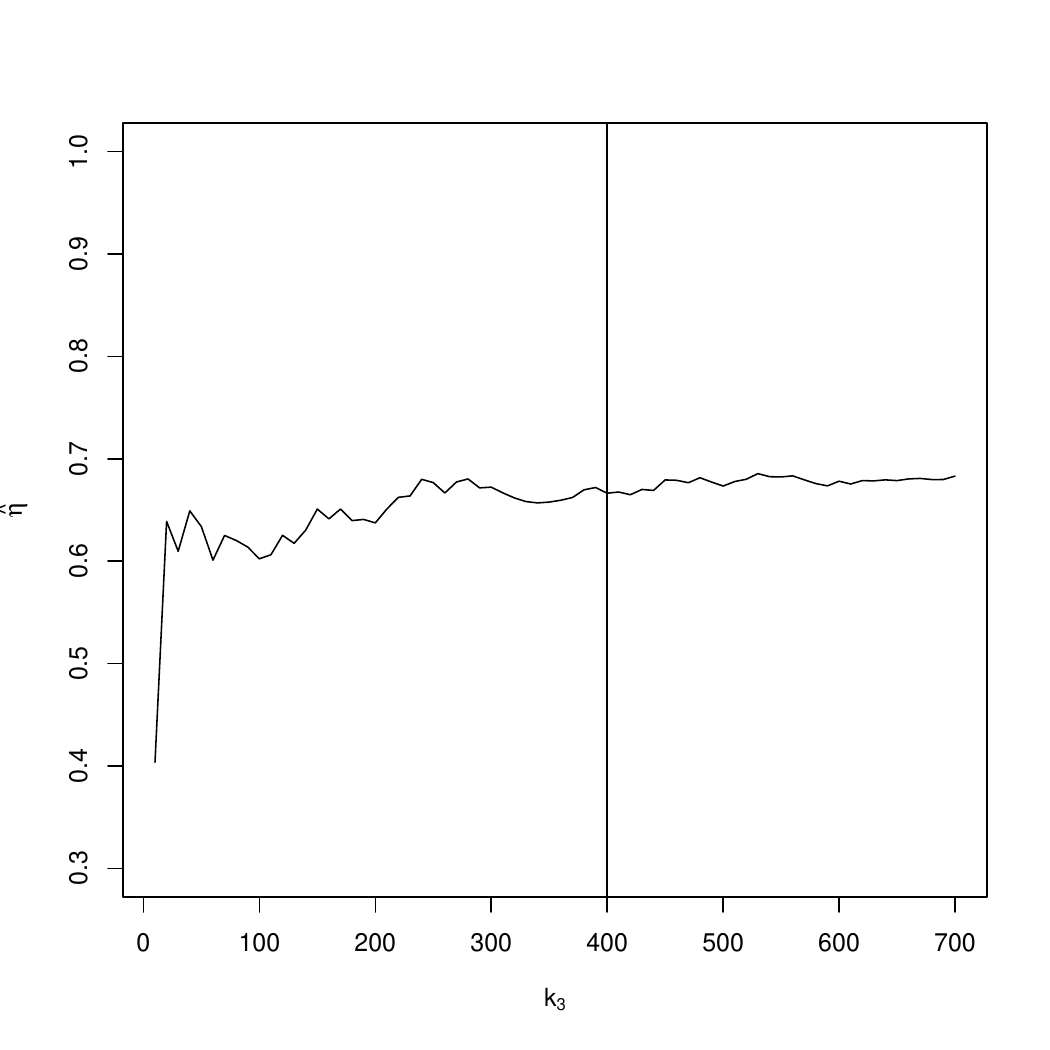}
              \caption{JPM}
            \end{subfigure}
              \hfill
              \begin{subfigure}[b]{0.3\textwidth}
                \includegraphics[width=\textwidth]{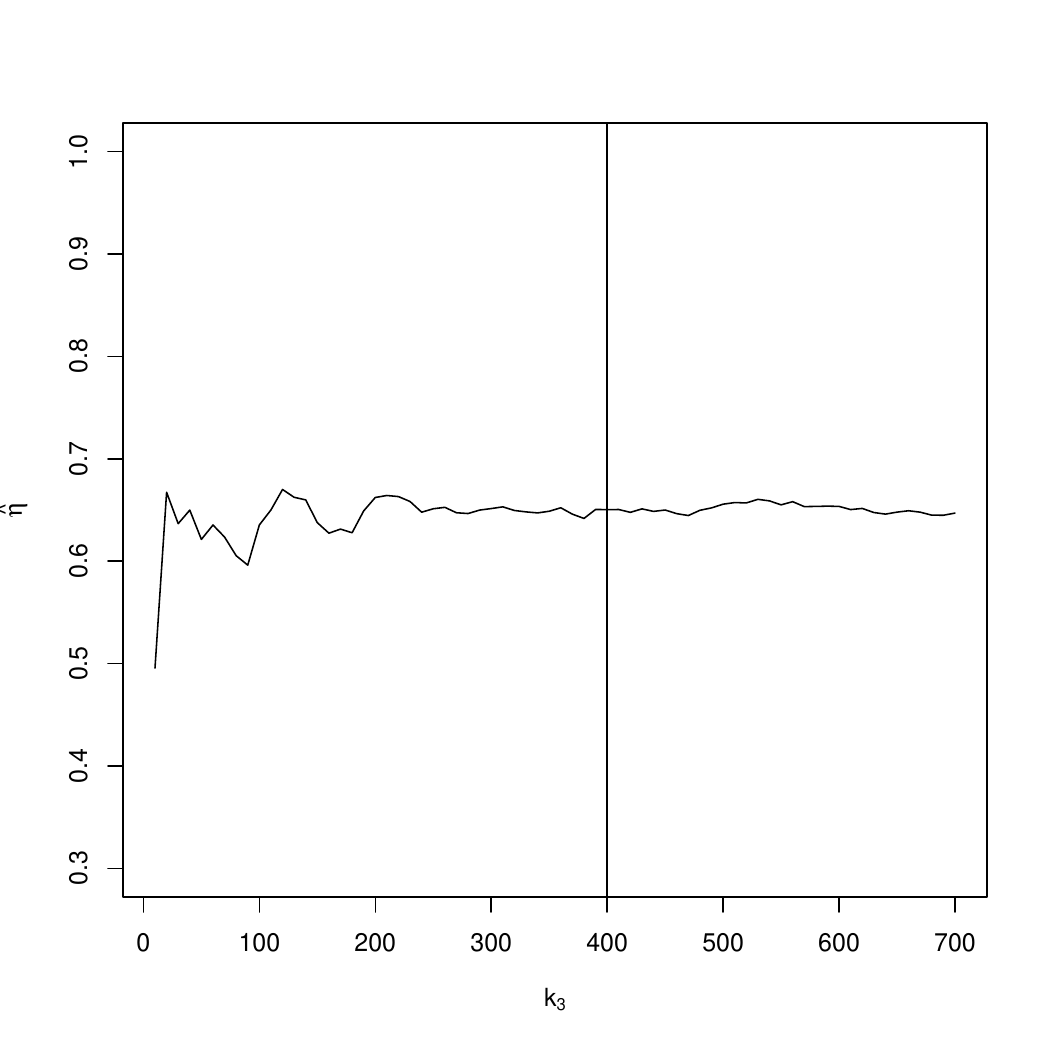}
                \caption{LNC}
              \end{subfigure}
                \hfill
                \begin{subfigure}[b]{0.3\textwidth}
                  \includegraphics[width=\textwidth]{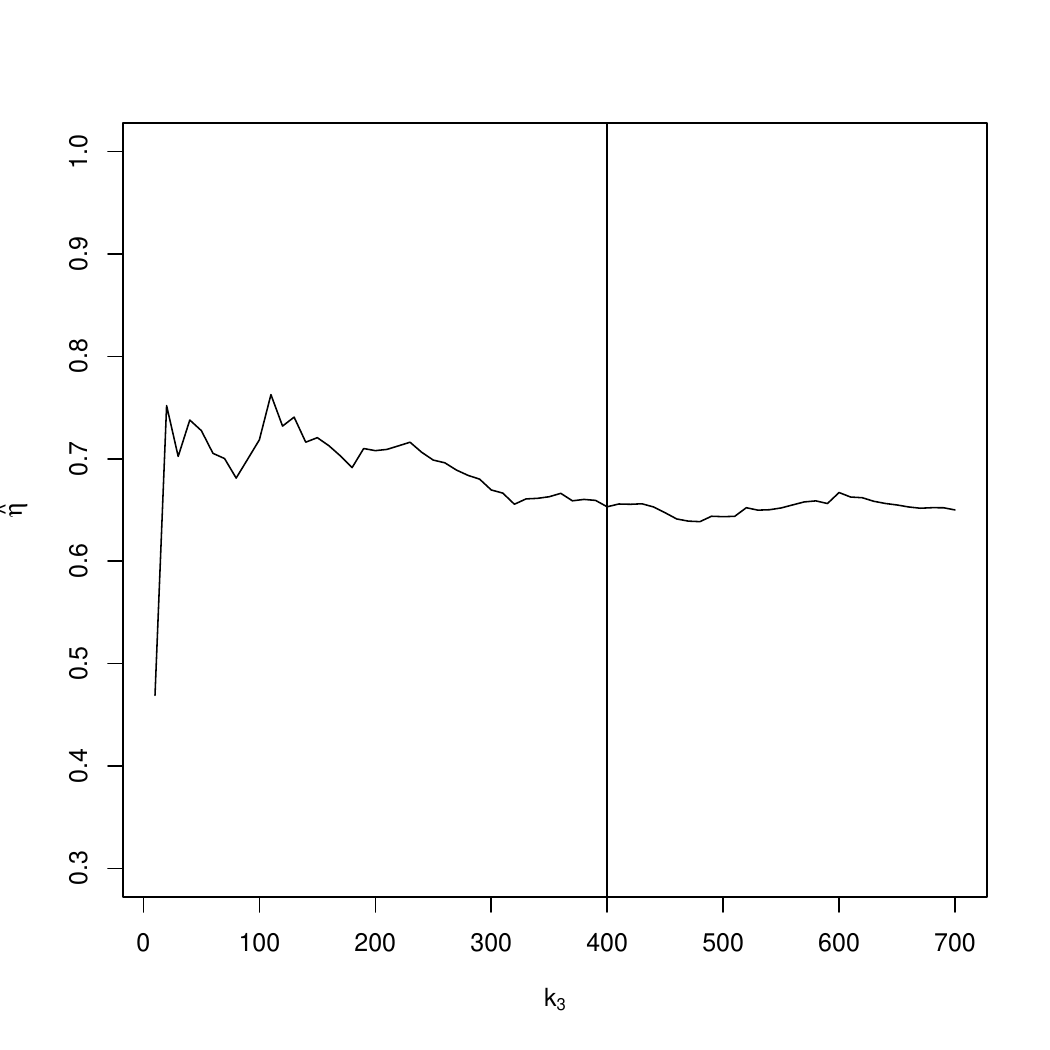}
                \caption{PGR}
              \end{subfigure}
                \hfill
                \begin{subfigure}[b]{0.3\textwidth}
                  \includegraphics[width=\textwidth]{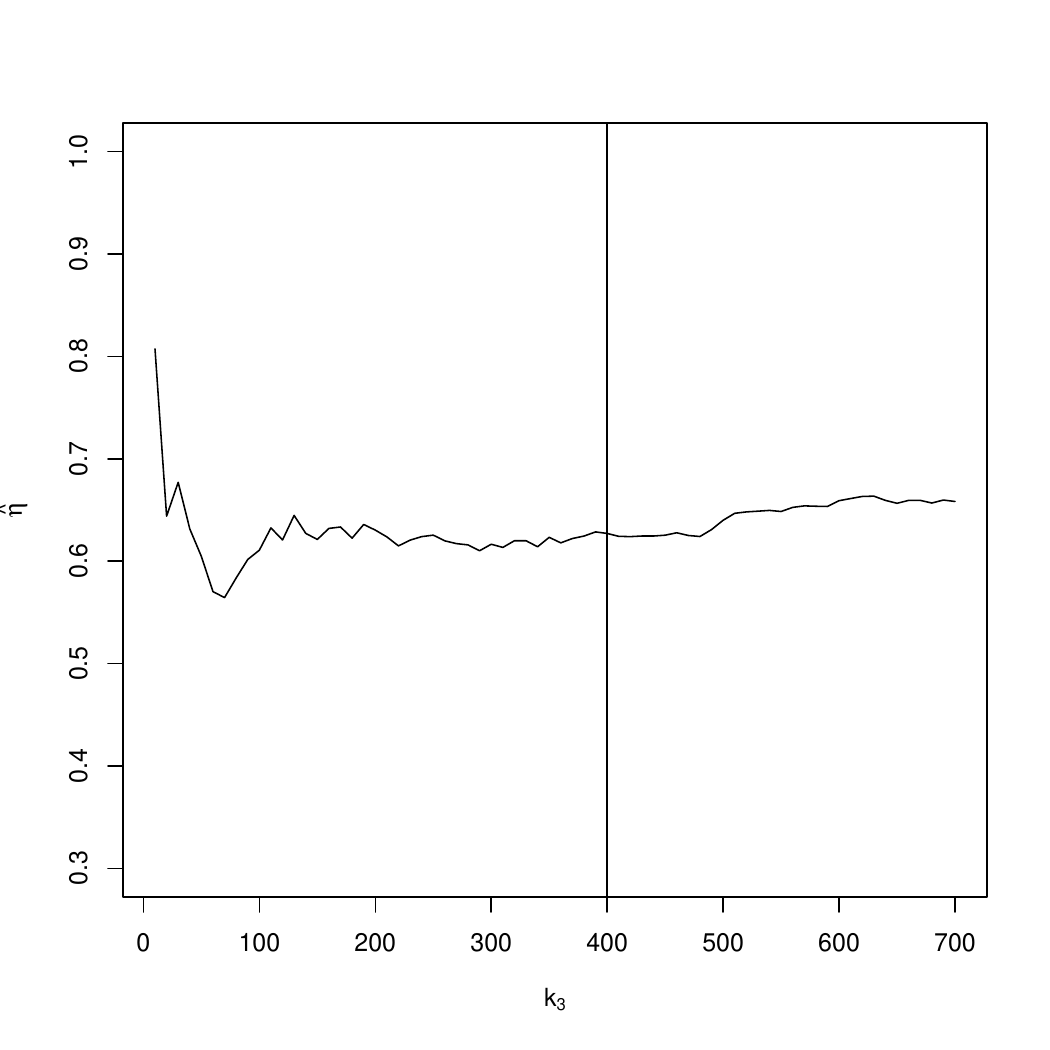}
                  \caption{TRV}
                \end{subfigure}
                  \hfill
                  \begin{subfigure}[b]{0.3\textwidth}
                    \includegraphics[width=\textwidth]{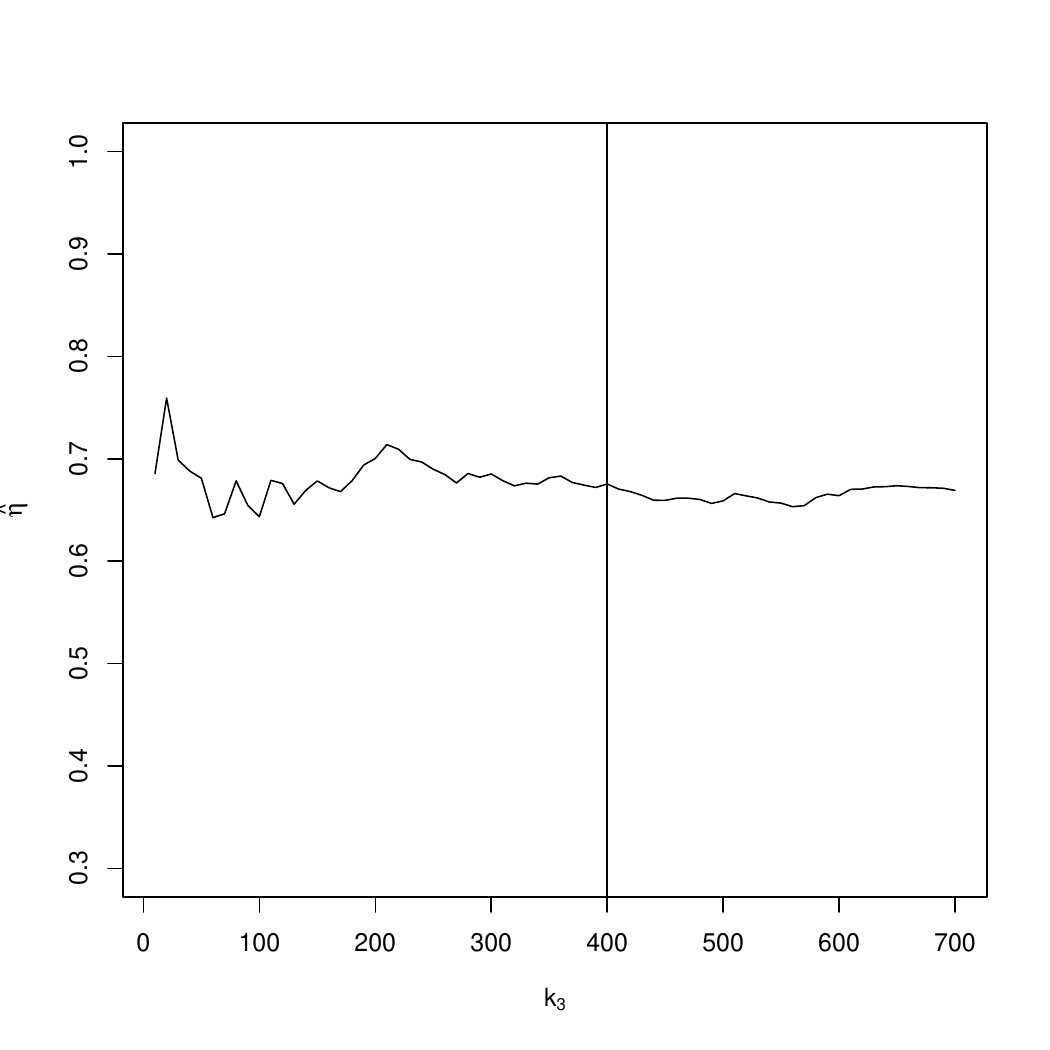}
                    \caption{WFC}
                  \end{subfigure}
                    \hfill
                    \begin{subfigure}[b]{0.3\textwidth}
                      \includegraphics[width=\textwidth]{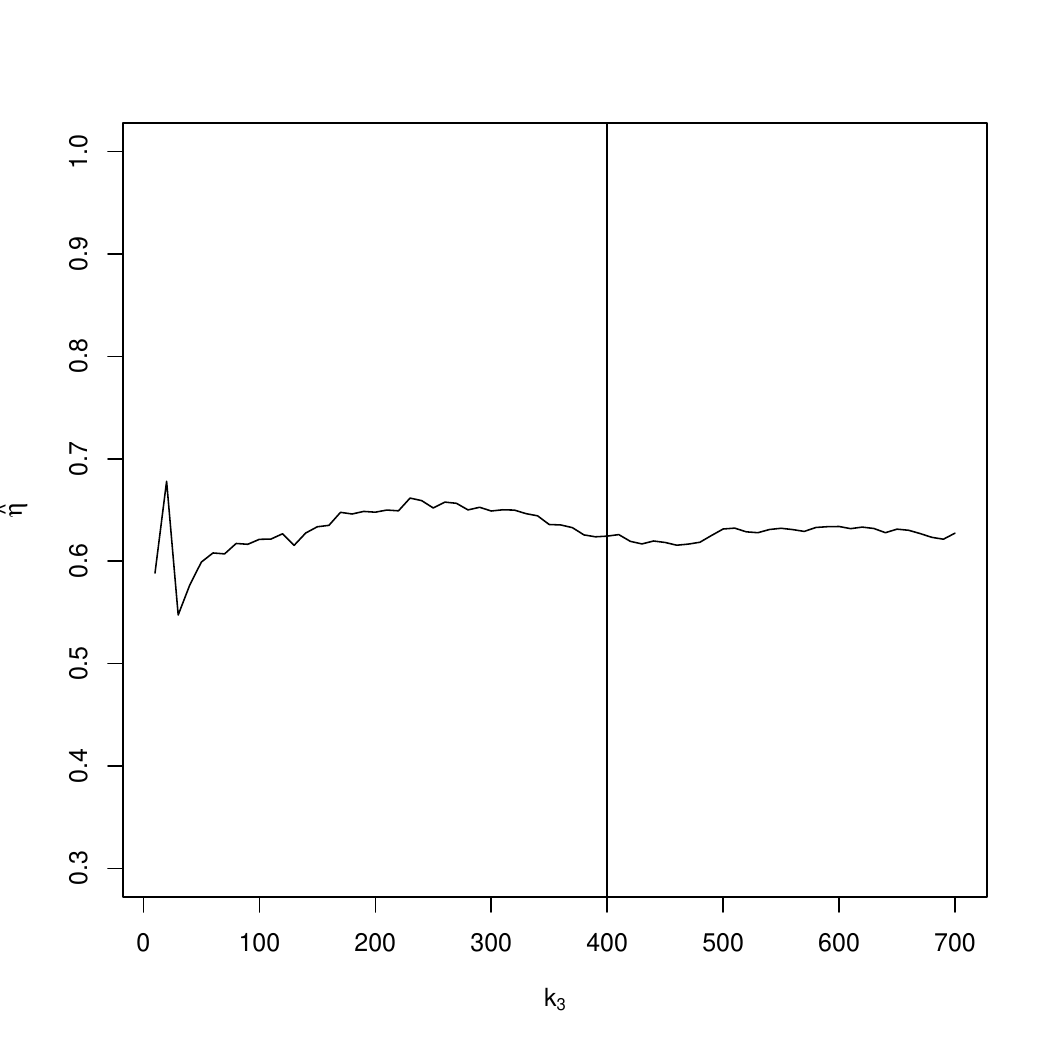}
                      \caption{WM}
                  \end{subfigure}

    \caption{Estimates of $\eta$ as a function of $k_3$ for the eleven institutions. }
    \label{fig1}
\end{figure}
By noting that CoVaR is itself VaR, or a quantile, for the conditional loss distribution of the system given that an institution $j$ exceeds its VaR, we make further comparisons across the considered method by calculating the average quantile score with the classical 1-homogeneous scoring function for $\left(1-p\right)$-quantile: $S(r, x)=\left(p_2-1\{x>r\}\right) r+1\{x>r\} x$, where $r$ denotes the estimate or forecast and $x$ the observation, and apply them to the sub-sample that coincide with the distress event that the
institution's losses being above its VaR. To keep consistency across different methods and parametric models, we take the empirical quantiles at level $1-p$ for each rolling window as the VaR estimates. The calculated average quantile scores are summarized in Table \ref{tb1}. When comparing within the proposed methodology, Model 2 leads to a superior forecasting performance relative to Model 1 for all of the financial institutions. 


\section{Proofs}\label{pf}Before proving Theorem \ref{thm1}, we present the following two lemmas, providing some preliminary results regarding  $\eta_{p}^*$ as well as the estimator $\widehat{\eta}_p^*$.
\begin{lemma}\label{lem1}Under the same conditions as in Theorem \ref{thm1}, we have that, as $n \rightarrow \infty$,
    $$
    \frac{p^{2-1/\eta}}{\eta^*_p} \rightarrow c_y\left(1,0 ; \boldsymbol{\theta}_0\right) \text { and }  \frac{p^{2-1/\eta}}{\widehat{\eta}^*_p} \stackrel{\mathbb{P}}{\rightarrow} c_y\left(1,0 ; \boldsymbol{\theta}_0\right).
    $$\end{lemma}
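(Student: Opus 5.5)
The plan is to treat the deterministic claim first and then transfer it to the estimator by a continuity argument.

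\textbf{Deterministic part.} Write $t_p:=p^{2-1/\eta}$. Since $1/2<\eta<1$, we have $2-1/\eta>0$, so $t_p\to0$ as $p\to0$.

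By (\ref{equa}), $\eta_p^*$ is the unique solution of $c(1,\eta_p^*;\boldsymbol{\theta}_0)=t_p$. The map $s\mapsto c(1,s;\boldsymbol{\theta}_0)$ is nondecreasing on $[0,1]$, and $c(1,0;\boldsymbol{\theta}_0)=0$. By Assumption \ref{asm6} it is $C^1$ near $s=0$ with derivative $c_y(1,0;\boldsymbol{\theta}_0)>0$. Hence it is strictly increasing on some $[0,\delta]$, and its inverse is continuous at $0$, which gives $\eta_p^*\to0$.

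The mean value theorem then yields $t_p=c(1,\eta_p^*;\boldsymbol{\theta}_0)=\eta_p^*\,c_y(1,\xi_p;\boldsymbol{\theta}_0)$ for some $\xi_p\in(0,\eta_p^*)$. Continuity of $c_y$ in $y$ near $(1,0)$ gives $c_y(1,\xi_p;\boldsymbol{\theta}_0)\to c_y(1,0;\boldsymbol{\theta}_0)$. This is the first claim.

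\textbf{Stochastic part.} Here $\widehat{\eta}_p^*$ solves $c(1,\widehat{\eta}_p^*;\widehat{\boldsymbol{\theta}})=\widehat t_p:=p^{2-1/\widehat{\eta}}$. I would proceed in four steps.

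First, I need $\widehat{\boldsymbol{\theta}}\stackrel{\mathbb{P}}{\to}\boldsymbol{\theta}_0$, and hence $\widehat\eta\stackrel{\mathbb{P}}{\to}\eta$ by continuity of $\eta(\cdot)$. This follows from the M-estimation results of \cite{lalancette2021rank}, which apply under Assumptions \ref{asm4}, \ref{asm4+} and \ref{asm5}. In fact they give $\sqrt m$-rates: $\widehat{\boldsymbol{\theta}}-\boldsymbol{\theta}_0=O_{\mathbb{P}}(m^{-1/2})$.

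Second, I control $\widehat t_p/t_p=\exp\{(1/\eta-1/\widehat\eta)\log(1/p)\}$. This ratio tends to $1$ provided $(\widehat\eta-\eta)\log(1/p)\stackrel{\mathbb{P}}{\to}0$. With $\widehat\eta-\eta=O_{\mathbb{P}}(m^{-1/2})$, this needs $\log(1/p)/\sqrt m\to0$. That should follow from the rate restrictions, for instance Assumption \ref{asm3} combined with $m\to\infty$ polynomially; if necessary I would argue it from the assumed growth of $k_3$ relative to $p$. In any case $\widehat t_p\to0$ in probability, since $\widehat\eta\to\eta>1/2$.

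Third, I apply the mean value theorem in $y$ at parameter $\widehat{\boldsymbol{\theta}}$:
\[
\widehat t_p=\widehat{\eta}_p^*\,c_y(1,\widehat\xi_p;\widehat{\boldsymbol{\theta}}), \qquad \widehat\xi_p\in(0,\widehat{\eta}_p^*).
\]

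Fourth, it remains to show $\widehat{\eta}_p^*\stackrel{\mathbb{P}}{\to}0$ and $c_y(1,\widehat\xi_p;\widehat{\boldsymbol{\theta}})\stackrel{\mathbb{P}}{\to}c_y(1,0;\boldsymbol{\theta}_0)$. Given both, $\widehat t_p/\widehat{\eta}_p^*\to c_y(1,0;\boldsymbol{\theta}_0)$, and combining with $\widehat t_p/t_p\to1$ gives the second claim.

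\textbf{Main obstacle.} The fourth step needs joint continuity of $(y,\boldsymbol{\theta})\mapsto c_y(1,y;\boldsymbol{\theta})$ at $(0,\boldsymbol{\theta}_0)$. Assumption \ref{asm6} gives only continuity in $y$ for each fixed $\boldsymbol{\theta}$.

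My first attempt is to invoke the Lipschitz continuity of $\boldsymbol{\theta}\mapsto c_y(1,0;\boldsymbol{\theta})$ and the remainder bound $c(1,y;\boldsymbol{\theta})/y-c_y(1,0;\boldsymbol{\theta})=O(y^{\breve\rho(\boldsymbol{\theta})})$ from Assumption \ref{asm7}. Together these replace the mean value step with
\[
\widehat t_p/\widehat{\eta}_p^*=c_y(1,0;\widehat{\boldsymbol{\theta}})+O\bigl((\widehat{\eta}_p^*)^{\breve\rho(\widehat{\boldsymbol{\theta}})}\bigr),
\]
where continuity of $\breve\rho$ at $\boldsymbol{\theta}_0$ keeps the exponent bounded away from $0$. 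For this I must assume the $O$-constant is locally uniform in $\boldsymbol{\theta}$.

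Failing that, I would argue uniform equicontinuity in a neighbourhood of $\boldsymbol{\theta}_0$ for the parametric family at hand. A lower bound $c(1,\delta;\boldsymbol{\theta})\ge\kappa>0$, uniform near $\boldsymbol{\theta}_0$, gives $\widehat{\eta}_p^*<\delta$ with probability tending to one. Iterating with $\delta\downarrow0$ then gives $\widehat{\eta}_p^*\stackrel{\mathbb{P}}{\to}0$.
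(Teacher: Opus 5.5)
Your deterministic half matches the paper's argument. The paper gets $\eta_p^*\to0$ by a subsequence contradiction and you get it from strict monotonicity of $c(1,\cdot\,;\boldsymbol{\theta}_0)$ near $0$; both then use the same mean value step. Your stochastic half has the same structure as the paper's, which invokes the $\sqrt m$-normality of $\widehat{\boldsymbol{\theta}}$ (Theorem 3 of \cite{lalancette2021rank}) and the delta method for $\widehat\eta$, and then says only that ``the rest of the proof follows similarly''. The two obstacles you isolate are exactly what that sentence hides, and they are genuine. Your proposal does not remove them either, so under the conditions of Theorem \ref{thm1} your steps two and four remain unproved.

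\textbf{Step two.} The condition $\log(1/p)/\sqrt m\to0$ does not follow from Assumptions \ref{asm1}--\ref{asm6}.
\begin{itemize}
\item Assumption \ref{asm3} bounds $\log(k_2/(np))$ by $\sqrt{k_1}$, not by $\sqrt m$.
\item Assumption \ref{asm4+} only requires $m\to\infty$ with $\sqrt m\,q_1(k_3/n)\to0$. Taking $k_3=n^{1-\eta}\ell_n^{\eta}$ gives $m=\ell_n$, which may grow as slowly as you like, while $\log(1/p)\ge\log(n/k_2)\to\infty$.
\item Polynomial growth of $m$ is not among the hypotheses.
\end{itemize}
Nor can the condition be dropped. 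Since $\sqrt m(\widehat\eta-\eta)$ has a nondegenerate normal limit, $\log(\widehat t_p/t_p)=(1/\widehat\eta-1/\eta)\log(1/p)$ is of exact stochastic order $\log(1/p)/\sqrt m$. (Your sign is flipped here: $\widehat t_p/t_p=p^{1/\eta-1/\widehat\eta}$; this is harmless.) So, given your step four, the lemma's second claim fails whenever $\log(1/p)/\sqrt m\not\to0$. This rate must therefore be added as a hypothesis, not ``argued from the growth of $k_3$'', since no such growth is assumed. It does hold in the paper's example $k_i=n^{\kappa}$, $p=n^{\zeta}$, where $m$ is a power of $n$ and $\log(1/p)\asymp\log n$.

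\textbf{Step four.} You are right that Assumption \ref{asm6} is pointwise in $\boldsymbol{\theta}$. Neither of your fallbacks is available under the lemma's hypotheses. Assumption \ref{asm7} belongs to Theorem \ref{thm2}, and its $O$-term is also pointwise in $\boldsymbol{\theta}$; equicontinuity and a uniform lower bound are new assumptions. A clean sufficient condition is joint continuity of $(y,\boldsymbol{\theta})\mapsto c_y(1,y;\boldsymbol{\theta})$ on a neighbourhood of $(0,\boldsymbol{\theta}_0)$.
\begin{itemize}
\item Since $c(1,0;\boldsymbol{\theta})=0$, it yields $c(1,y;\boldsymbol{\theta})\ge y\,c_y(1,0;\boldsymbol{\theta}_0)/2$ for small $y$ and $\boldsymbol{\theta}$ near $\boldsymbol{\theta}_0$.
\item By monotonicity this gives $\widehat\eta_p^*\stackrel{\mathbb{P}}{\to}0$.
\item It then directly gives $c_y(1,\widehat\xi_p;\widehat{\boldsymbol{\theta}})\stackrel{\mathbb{P}}{\to}c_y(1,0;\boldsymbol{\theta}_0)$.
\end{itemize}
With these two conditions stated explicitly as assumptions, your four steps form a complete proof, and one more careful than the paper's.
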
\noindent\textbf{
    Proof.} In order to prove the limit relation regarding $\eta^*_p$, we first show that as $n \rightarrow \infty, \eta^*_p \rightarrow 0$. If otherwise, there exists a subsequence of integers $\left\{n_l\right\}$, such that $\eta^*_{p\left(n_l\right)} \rightarrow d>0$ as $l \rightarrow \infty$. Meanwhile, as $l \rightarrow \infty$, $c\left(1, \eta^*_{p\left(n_l\right)} ; \boldsymbol{\theta}_0\right) \rightarrow c\left(1,d ; \boldsymbol{\theta}_0\right)>0$ which follows from Assumption \ref{asm6} and the fact that $c\left(1, y ; \boldsymbol{\theta}_0\right)$ is a non-decreasing function in $y$. However, this contradicts with $c\left(1, \eta^*_{p\left(n_l\right)} ; \boldsymbol{\theta}_0\right)=(p\left(n_l\right))^{2-1/\eta} \rightarrow 0$ as $n \rightarrow \infty$. Hence, we conclude that $\eta_{p}^* \rightarrow 0$ as $n \rightarrow \infty$.

Using the mean value theorem, we have that there exists a series of constants $\xi_n \in\left[0, \eta_{p}^*\right]$ such that
$$
p^{2-1/\eta}=c\left(1, \eta_{p}^* ; \boldsymbol{\theta}_0\right)=c\left(1,0 ; \boldsymbol{\theta}_0\right)+\eta_{p}^* c_y\left(1, \xi_n ; \boldsymbol{\theta}_0\right)=\eta_{p}^* c_y\left(1, \xi_n ; \boldsymbol{\theta}_0\right).
$$
Hence we get that, as $n \rightarrow \infty$,
$$
\frac{p^{2-1/\eta}}{\eta_{p}^*}=c_y\left(1, \xi_n ; \boldsymbol{\theta}_0\right) \rightarrow c_y\left(1,0 ; \boldsymbol{\theta}_0\right),
$$
by the fact that $\xi_n \rightarrow 0$ as $n \rightarrow \infty$ and that $c_y\left(x, y ; \boldsymbol{\theta}_0\right)$ is a continuous function at $\left(1,0 ; \boldsymbol{\theta}_0\right)$.

For the limit relation regarding $\widehat{\eta}_p^*$, we need to  replace $\boldsymbol{\theta}_0$ and $\eta$ by $\widehat{\boldsymbol{\theta}}$ and $\widehat{\eta}$, respectively. Based on Theorem 3 in \cite{lalancette2021rank}, we have \begin{equation}\sqrt{m}(\widehat{\boldsymbol{\theta}}-\boldsymbol{\theta}_0)\stackrel{d}{\rightarrow}  N(0,\sigma_{\theta}^2),\label{theta}\end{equation}for some $\sigma^2_{\theta}>0,$ where $m=n(k_3/n)^{1/\eta}$, defined in Assumption \ref{asm4+}. By applying Delta method to $\widehat{\eta}=\eta(\widehat{\boldsymbol{\theta}})$, we have $$\sqrt{m}(\widehat{\eta}-\eta)\stackrel{d}{\rightarrow} N(0,\sigma_{\eta}^2),$$for some $\sigma^2_{\eta}>0.$ 
The rest of the proof follows similarly and we therefore omit the details.

$\hfill\qedsymbol$

\begin{lemma} \label{lem2}Under the same conditions as in Theorem \ref{thm1}, we have, as $n \rightarrow \infty$,
    $$
    \frac{\eta_{p}}{\eta_{p}^*} \rightarrow 1.
    $$
\end{lemma}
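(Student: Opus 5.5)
We compare the exact equation (\ref{qequa}), $Q(p,p\eta_p)=p^2$, with the defining equation (\ref{equa}) of $\eta_p^*$, and use the uniform second-order Assumption \ref{asm4} to control the gap. The approximation error $q_1(p)$ is $o(p^{2-1/\eta})$, and $p^{2-1/\eta}$ is the size of $c(1,\eta_p^*)$. Combined with the linear behaviour of $c(1,\cdot)$ near $0$ (Assumption \ref{asm6} and Lemma \ref{lem1}), this should force $\eta_p/\eta_p^*\to1$.

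\textbf{Step 1: $\eta_p\to0$ and $\eta_p$ stays bounded.} Since $Q(p,p\eta_p)=p^2$ and $Q(p,p)\gg p^2$ for $1/2<\eta<1$, we have $\eta_p<1$ eventually. Suppose instead that $\eta_p\to d>0$ along a subsequence. By homogeneity and Assumption \ref{asm4}, $p^{-1/\eta}Q(p,pd)\to c(1,d)>0$. But $p^{-1/\eta}Q(p,p\eta_p)=p^{2-1/\eta}\to0$, which is a contradiction; monotonicity of $Q$ handles the non-exact limit. Hence $\eta_p\to0$.

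\textbf{Step 2: transfer the second-order condition.} Assumption \ref{asm4} is stated uniformly on the arc $\mathcal S^+$. Write $(1,\eta_p)=r_p(x_p,y_p)$ with $r_p=\sqrt{1+\eta_p^2}\to1$ and $(x_p,y_p)\in\mathcal S^+$. Then $Q(p,p\eta_p)=Q(pr_p x_p,pr_p y_p)$. Applying the assumption at level $pr_p$ and using homogeneity of $c$ gives
\[
p^{-1/\eta}Q(p,p\eta_p)=c(1,\eta_p)+O\big(q_1(pr_p)\big).
\]
Since $r_p\to1$, I will take $q_1$ to be regularly varying, or at least to satisfy $q_1(pr_p)=O(q_1(p))$. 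This is implicit in the paper and is a point to flag. We conclude
\[
c(1,\eta_p)=p^{2-1/\eta}+O(q_1(p))=p^{2-1/\eta}\,(1+o(1)),
\]
and therefore $c(1,\eta_p)/c(1,\eta_p^*)\to1$.

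\textbf{Step 3: invert $c(1,\cdot)$ near $0$.} By Assumption \ref{asm6}, $c(1,0)=0$, $c_y$ is continuous near $(1,0)$, and $c_y(1,0)>0$. The mean value theorem, applied as in Lemma \ref{lem1}, gives $c(1,y)=y\,c_y(1,\xi_y)$ with $\xi_y\in[0,y]$. Both $\eta_p\to0$ and $\eta_p^*\to0$ (Lemma \ref{lem1}), so
\[
\frac{c(1,\eta_p)}{c(1,\eta_p^*)}=\frac{\eta_p}{\eta_p^*}\cdot\frac{c_y(1,\xi_{\eta_p})}{c_y(1,\xi_{\eta_p^*})},
\]
and the second factor tends to $c_y(1,0)/c_y(1,0)=1$. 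Combining with Step 2 yields $\eta_p/\eta_p^*\to1$.

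\textbf{Main obstacle.} The delicate point is Step 2. Assumption \ref{asm4} is phrased on the unit arc, and the error term must be shown to be negligible relative to $p^{2-1/\eta}$ even though the point $(1,\eta_p)$ approaches the boundary $y=0$, where $c$ is itself small. This is exactly why the requirement $q_1(p)=o(p^{2-1/\eta})$ is built into Assumption \ref{asm4}. I would handle it through the uniformity on $\mathcal S^+$ together with the mild regularity of $q_1$ noted above.
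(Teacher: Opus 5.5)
Your proof is correct and uses the same ingredients as the paper's: Assumption \ref{asm4} gives $c(1,\eta_p;\boldsymbol{\theta}_0)=p^{2-1/\eta}(1+o(1))$, and the mean value theorem, with continuity and positivity of $c_y$ at $(1,0)$, inverts it. You argue directly through the ratio $c(1,\eta_p)/c(1,\eta_p^*)$ instead of the paper's subsequence contradiction for the upper and lower bounds, and you rescale $(1,\eta_p)$ onto $\mathcal{S}^{+}$, a point the paper glosses over by applying Assumption \ref{asm4} at $(1,\eta_p)$ directly. The extra regularity you impose on $q_1$ is unnecessary: since $r_p\in[1,\sqrt{2}]$ eventually and $q_1(t)=o(t^{2-1/\eta})$ as $t\to0$, you get $q_1(pr_p)=o\bigl((pr_p)^{2-1/\eta}\bigr)=o(p^{2-1/\eta})$ outright.
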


\noindent\textbf{Proof.} We first show that, as $n \rightarrow \infty, \eta_{p} \rightarrow 0$. Otherwise, there exists a subsequence of integers $\left\{n_l\right\}$, such that $\eta_{p\left(n_l\right)} \rightarrow d>0$ as $l \rightarrow \infty$. Recall that by the definition of $\eta_{p}$ we have $$
Q\left(p, p\eta_{p}\right)/p^{1-1/\eta}=p^{2-1/\eta}.$$By taking $l \rightarrow \infty$ on both sides of this equation, and using the assumption that $\eta_{p\left(n_l\right)} \rightarrow d>0$ as $n \rightarrow \infty$, we get that $c\left(1, d; \boldsymbol{\theta}_0\right)=0$, which contradicts Assumption \ref{asm6} and the fact that $c\left(1, y ; \boldsymbol{\theta}_0\right)$ is a non-decreasing function in $y$. Hence, we conclude that, as $n \rightarrow \infty, \eta_{p} \rightarrow 0$.

Next we show, by contradiction, that
$$
\limsup _{n \rightarrow \infty} \frac{\eta_{p}}{\eta_{p}^*} \leq 1.
$$If assuming otherwise, there exists a subsequence of $\left\{n_l\right\}_{l=1}^{\infty}$ such that as $l \rightarrow \infty, n_l \rightarrow \infty$ and
$$
\frac{\eta_{p\left(n_l\right)}}{\eta_{p\left(n_l\right)}^*} \rightarrow d>1.
$$
 Without loss of generality (W.l.o.g.), we use the notation $n$ instead of $n_l$ and write $p=p(n)$. Therefore, for any $1<\tilde{d}<d$, there exists $n_0=n_0(\tilde{d})$ such that for $n>n_0, \eta_{p}/\eta_{p}^*>\tilde{d}$.

Note that $\eta_{p}^*<\tilde{d} \eta_{p}^*<\eta_{p}$. By the mean value theorem, we get that for each $n$, there exists $\xi_n \in\left(\eta_{p}^*, \eta_{p}\right)$ such that
\begin{equation}
  \label{der}
c\left(1, \eta_{p} ; \boldsymbol{\theta}_0\right)-c\left(1, \eta_{p}^* ; \boldsymbol{\theta}_0\right)=c_y\left(1, \xi_n ; \boldsymbol{\theta}_0\right)\left(\eta_{p}-\eta_{p}^*\right).
\end{equation}

As $n \rightarrow \infty$, since both $\eta_{p}^* \rightarrow 0$ and $\eta_{p} \rightarrow 0$ hold, we get $\xi_n \rightarrow 0$. Further note that $\eta_{p}-\eta_{p}^*>(\tilde{d}-1) \eta_{p}^*$. By applying Lemma \ref{lem1} and the continuity of $c_y(x, y ; \boldsymbol{\theta})$ at $\left(1,0 ; \boldsymbol{\theta}_0\right)$, we obtain that

\begin{align}
\liminf _{n \rightarrow \infty} p^{1/\eta-2}\left(c\left(1, \eta_{p} ; \boldsymbol{\theta}_0\right)- p^{2-1/\eta}\right) & =\liminf _{n \rightarrow \infty}p^{1/\eta-2}\left(c\left(1, \eta_{p} ; \boldsymbol{\theta}_0\right)-c\left(1, \eta_{p}^* ; \boldsymbol{\theta}_0\right)\right)\nonumber \\
& =\liminf _{n \rightarrow \infty}\eta_{p}^*p^{1/\eta-2}\frac{c\left(1, \eta_{p} ; \boldsymbol{\theta}_0\right)-c\left(1, \eta_{p}^* ; \boldsymbol{\theta}_0\right)}{\eta_{p}^*}\nonumber \\
& \geq \frac{1}{c_y\left(1,0 ; \boldsymbol{\theta}_0\right)}c_y\left(1,0 ; \boldsymbol{\theta}_0\right)(\tilde{d}-1) =\tilde{d}-1>0.\label{liminf}
\end{align}On the other side, by Assumption \ref{asm4}, it follows that
\begin{align}p^{1/\eta-2}\left(c\left(1, \eta_{p} ; \boldsymbol{\theta}_0\right)- p^{2-1/\eta}\right)=&p^{1/\eta-2}\left(c\left(1, \eta_{p} ; \boldsymbol{\theta}_0\right)- \frac{Q(p,p\eta_{p})}{p^{1/\eta}}\right)\nonumber\\=&p^{1/\eta-2}O(q_1(p))=o(1).\label{limsup}
\end{align}
The two limit relations (\ref{liminf}) and (\ref{limsup}) contradict each other. Therefore, we conclude that
$$
\limsup _{n \rightarrow \infty} \frac{\eta_{p}}{\eta_{p}^*} \leq 1.
$$

Similarly, one can show a lower bound for $\eta_{p}/\eta_{p}^*$, which completes the proof of the lemma.$\hfill\qedsymbol$

\bigskip\noindent\textbf{Proof of Theorem \ref{thm1}.} Based on the definition of $\eta_{p}$, we rewrite 
    \begin{align}
        \frac{\widehat{\operatorname{CoVaR}}_{Y \mid X}(p)}{\operatorname{CoVaR}_{Y \mid X}(p)} & =\frac{\left(\widehat{\eta}_p^*\right)^{-\widehat{\gamma}} \widehat{\operatorname{VaR}}_Y(p)}{\operatorname{VaR}_{Y}\left(p \eta_{p}\right)} \nonumber\\
    & =\left(\frac{\widehat{\eta}_p^*}{\eta_{p}}\right)^{-\widehat{\gamma}} \times \eta_{p}^{\gamma-\widehat{\gamma}} \times \frac{\widehat{\operatorname{VaR}}_Y(p)}{\operatorname{VaR}_Y(p)} \times \frac{\left(\eta_{p}\right)^{-\gamma} \operatorname{VaR}_Y(p)}{\operatorname{VaR}_Y\left(p \eta_{p}\right)} \nonumber\\
    & =: I_1 \times I_2 \times I_3 \times I_4 .\label{dec}
    \end{align}
    We will show that $I_j \stackrel{\mathbb{P}}{\rightarrow} 1,$ for $j=1,2,3,4$, as $n \rightarrow \infty$.

    Firstly, we handle $I_1$. Following the asymptotic property of the Hill estimator (e.g., Theorem 3.2.5 in \cite{de2007extreme}), Assumptions \ref{asm1} and \ref{asm2} for $k_1$ imply that as $n \rightarrow \infty$,
\begin{equation}
    \label{gam}\sqrt{k_1}(\widehat{\gamma}-\gamma) \stackrel{d}{\rightarrow} N\left(\frac{\lambda_1}{1-\rho}, \gamma^2\right),
\end{equation}
which implies that $\widehat{\gamma} \stackrel{\mathbb{P}}{\rightarrow} \gamma$. Together with Lemmas \ref{lem1} and \ref{lem2}, we conclude that $I_1 \stackrel{\mathbb{P}}{\rightarrow} 1$ as $n \rightarrow \infty$.

Secondly, we handle $I_2$. Given the limit relation (\ref{gam}), we only need to show that $\log \left(\eta_{p}\right) / \sqrt{k_1} \rightarrow 0$ as $n \rightarrow \infty$. From Lemmas \ref{lem1} and \ref{lem2}, we obtain $p^{1-1/\eta}\eta_{p}  \rightarrow1 / c_y\left(1,0; \boldsymbol{\theta}_0\right)$ as $n \rightarrow \infty$. Together with the limit relation regarding $k_1$ in Assumption \ref{asm2}, we get that $I_2 \stackrel{\mathbb{P}}{\rightarrow} 1$ as $n \rightarrow \infty$.

The term $I_3$ is handled by the asymptotic property of the VaR estimator; see, e.g. Theorem 4.3.8 in \cite{de2007extreme}. More specifically, under Assumptions \ref{asm1} to \ref{asm3}, the VaR estimator in (\ref{var}) has the following asymptotic property:
    $$
    \min \left(\sqrt{k_2}, \frac{\sqrt{k_1}}{\log \left(k_2 / n p\right)}\right)\left(\frac{\widehat{\operatorname{VaR}}_Y(p)}{\operatorname{VaR}_Y(p)}-1\right)=O_{\mathbb{P}}(1) .
    $$A direct consequence is that $I_3 \stackrel{\mathbb{P}}{\rightarrow} 1$ as $n \rightarrow \infty$.
    
    Finally, we handle the deterministic term $I_4$. Notice that $\operatorname{VaR}_Y(p)=U_2(1 / p)$ and $\operatorname{VaR}_Y\left(p \eta_{p}\right)=U_2\left(1 /\left(p \eta_{p}\right)\right)$. By applying Assumption \ref{asm1} with $t=1 / p$ and $x=1 / \eta_{p}$, we get that
    $$
    \lim _{n \rightarrow \infty} \frac{\frac{\operatorname{VaR}_Y\left(p \eta_{p}\right)}{\operatorname{VaR}_Y(p)} \eta_{p}^\gamma-1}{A(1 / p)}=-\frac{1}{\rho} .
    $$
    As $n \rightarrow \infty$, since $A(1 / p) \rightarrow 0$, we get that $I_4 \rightarrow 1$. The proof is complete.$\hfill\qedsymbol$  

    \bigskip
    Before proving Theorem \ref{thm2}, we present the following two lemmas, providing further asymptotic results regarding  $\eta_{p}^*$ as well as the estimator $\widehat{\eta}_p^*$. Note that Lemma \ref{lem2+} is a strengthened version of Lemma \ref{lem2}.
    \begin{lemma} \label{lem2+}Under the same conditions as in Theorem \ref{thm2}, we have, as $n \rightarrow \infty$,
      $$
\frac{\sqrt{k_1}}{\log (k_2/np)}\left(\frac{\eta_{p}}{\eta_{p}^*}-1\right)\rightarrow 0.
$$
  \end{lemma}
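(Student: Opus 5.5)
We prove a quantitative version of the argument in Lemma \ref{lem2}: instead of deriving a contradiction, we will express $\eta_p/\eta_p^*-1$ through the mean value theorem and bound it by the rate in Assumption \ref{asm4}. We use the homogeneity of $c$ to transfer Assumption \ref{asm4} (stated on $\mathcal{S}^+$) to the point $(1,\eta_p)$. Since $\eta_p\to0$, the norm $\|(1,\eta_p)\|\to1$. Homogeneity of order $1/\eta$ then gives
\[
p^{-1/\eta}Q(p,p\eta_p)=c(1,\eta_p;\boldsymbol{\theta}_0)+O(q_1(p)),
\]
possibly after replacing $p$ by $p\|(1,\eta_p)\|$ (a bounded factor, which we absorb in the $O$-term, assuming $q_1$ is regularly varying or at least stable under bounded rescaling). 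Combined with (\ref{qequa}) and (\ref{equa}), this yields
\[
c(1,\eta_p;\boldsymbol{\theta}_0)-c(1,\eta_p^*;\boldsymbol{\theta}_0)=O(q_1(p)).
\]

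Next, we apply the mean value theorem exactly as in (\ref{der}). There is a $\xi_n$ between $\eta_p^*$ and $\eta_p$ such that
\[
c_y(1,\xi_n;\boldsymbol{\theta}_0)\,\eta_p^*\left(\frac{\eta_p}{\eta_p^*}-1\right)=O(q_1(p)).
\]
By Lemmas \ref{lem1} and \ref{lem2}, $\xi_n\to0$. By Assumption \ref{asm6}, $c_y(1,\xi_n;\boldsymbol{\theta}_0)\to c_y(1,0;\boldsymbol{\theta}_0)>0$. By Lemma \ref{lem1}, $\eta_p^*\sim p^{2-1/\eta}/c_y(1,0;\boldsymbol{\theta}_0)$. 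Dividing through, we obtain
\[
\frac{\eta_p}{\eta_p^*}-1=O\!\left(\frac{q_1(p)}{p^{2-1/\eta}}\right).
\]

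Finally, we multiply by $\sqrt{k_1}/\log(k_2/np)$. Assumption \ref{asm8} states precisely that $\frac{\sqrt{k_1}}{\log(k_2/np)}\cdot\frac{q_1(p)}{p^{2-1/\eta}}\to0$, which gives the claim.

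The main obstacle is the first step. Assumption \ref{asm4} is formulated on the unit circle, uniformly in direction, and we must apply it at the specific point $(1,\eta_p)$ with the correct scaling in $p$. We will therefore write $(1,\eta_p)=r_n(x_n,y_n)$ with $(x_n,y_n)\in\mathcal{S}^+$ and $r_n\to1$. Using homogeneity of both $Q$ asymptotically and $c$ exactly, this gives
\[
p^{-1/\eta}Q(p,p\eta_p)=r_n^{1/\eta}\bigl[c(x_n,y_n)+O(q_1(pr_n))\bigr].
\]
We then argue that $q_1(pr_n)=O(q_1(p))$. If this cannot be justified directly, we will read the uniformity in Assumption \ref{asm4} as covering a neighbourhood of $(1,0)$, so that it can be applied with $x=1$, $y=\eta_p$ directly, as the authors themselves did in (\ref{limsup}).
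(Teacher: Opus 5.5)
Your proposal is correct and follows essentially the same route as the paper. You combine the mean value theorem identity (\ref{der}) with $p^{-1/\eta}Q(p,p\eta_p)=p^{2-1/\eta}=c(1,\eta_p^*;\boldsymbol{\theta}_0)$ and Assumption \ref{asm4} to get $\eta_p/\eta_p^*-1=O\bigl(q_1(p)/p^{2-1/\eta}\bigr)$, and Assumption \ref{asm8} sends this to zero after multiplying by $\sqrt{k_1}/\log(k_2/np)$.

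Your homogeneity argument moving Assumption \ref{asm4} from $\mathcal{S}^+$ to the point $(1,\eta_p)$ is a refinement the paper skips; it simply applies the assumption at $(1,\eta_p)$, as in (\ref{limsup}). As you note, that step needs a mild extra condition such as $q_1(pr_n)=O(q_1(p))$ for $r_n\to1$, for example regular variation of $q_1$.
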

  
  \noindent\textbf{Proof.} Recall that by(\ref{der}): $$c_y\left(1, \xi_n ; \boldsymbol{\theta}_0\right)\left(\eta_{p}^*-\eta_{p}\right)=c\left(1, \eta_{p}^* ; \boldsymbol{\theta}_0\right)-c\left(1, \eta_{p} ; \boldsymbol{\theta}_0\right)=p^{2-1/\eta}-c\left(1, \eta_{p} ; \boldsymbol{\theta}_0\right).$$Therefore, we get$$\frac{\sqrt{k_1}}{\log (k_2/np)}\left(\frac{\eta_{p}}{\eta_{p}^*}-1\right)=\frac{1}{c_y\left(1, \xi_n ; \boldsymbol{\theta}_0\right)}\frac{p^{2-1/\eta}}{\eta_{p}}\frac{\sqrt{k_1}}{\log (k_2/np)}\frac{p^{2-1/\eta}-c\left(1, \eta_{p} ; \boldsymbol{\theta}_0\right)}{p^{2-1/\eta}}.$$
  As $n\to\infty$, $\frac{1}{c_y\left(1, \xi_n ; \boldsymbol{\theta}_0\right)}\to\frac{1}{c_y\left(1, 0 ; \boldsymbol{\theta}_0\right)}$ and $\frac{p^{2-1/\eta}}{\eta_{p}}=O(1)$. Recalling the definition of $\eta_{p}$ and Assumption \ref{asm8}, we get that$$\frac{\sqrt{k_1}}{\log (k_2/np)}\frac{p^{2-1/\eta}-c\left(1, \eta_{p} ; \boldsymbol{\theta}_0\right)}{p^{2-1/\eta}}=\frac{\sqrt{k_1}}{\log (k_2/np)}\frac{q_1(p)}{p^{2-1/\eta}}\to0,$$as $n\to\infty$. The proof is complete.
  $\hfill\qedsymbol$  
    \begin{lemma} \label{lem3}Under the same conditions as in Theorem \ref{thm2}, we have, as $n \rightarrow \infty$,
      $$
      \frac{\sqrt{k_1}}{\log (k_2/np)}\left(\frac{\eta_{p}}{\widehat{\eta}_p^*}-1\right)\stackrel{\mathbb{P}}{\rightarrow} 0.
      $$
  \end{lemma}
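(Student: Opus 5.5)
The plan is to split the ratio as
$$
\frac{\eta_p}{\widehat{\eta}_p^*}=\frac{\eta_p}{\eta_p^*}\cdot\frac{\eta_p^*}{\widehat{\eta}_p^*}.
$$
By Lemma \ref{lem2+}, the first factor is $1+o\bigl(\log(k_2/np)/\sqrt{k_1}\bigr)$. It therefore suffices to show
$$
\frac{\sqrt{k_1}}{\log(k_2/np)}\left(\frac{\eta_p^*}{\widehat{\eta}_p^*}-1\right)\stackrel{\mathbb{P}}{\rightarrow}0.
$$
I would compare each of $\eta_p^*$ and $\widehat{\eta}_p^*$ with its linearized version, using Assumption \ref{asm7}.

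Write $r=2-1/\eta$ and $\widehat r=2-1/\widehat{\eta}$. Put $y=\eta_p^*$ in Assumption \ref{asm7} and use the defining equation (\ref{equa}). This gives
$$
p^{r}=\eta_p^*\bigl(c_y(1,0;\boldsymbol{\theta}_0)+O((\eta_p^*)^{\breve\rho(\boldsymbol{\theta}_0)})\bigr).
$$
Since $\eta_p^*\asymp p^{r}$ by Lemma \ref{lem1}, it follows that $\eta_p^*=\frac{p^{r}}{c_y(1,0;\boldsymbol{\theta}_0)}\bigl(1+O(p^{r\breve\rho(\boldsymbol{\theta}_0)})\bigr)$.

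The same expansion applies to $\widehat{\eta}_p^*$ through (\ref{hat}), with $\widehat{\boldsymbol{\theta}}$ in place of $\boldsymbol{\theta}_0$. This gives
$$
\widehat{\eta}_p^*=\frac{p^{\widehat r}}{c_y(1,0;\widehat{\boldsymbol{\theta}})}\bigl(1+O_{\mathbb{P}}(p^{\widehat r\breve\rho(\widehat{\boldsymbol{\theta}})})\bigr).
$$
This step needs the $O$-constant in Assumption \ref{asm7} to be locally uniform in $\boldsymbol{\theta}$ near $\boldsymbol{\theta}_0$; I would take this as implicit in the assumption, or argue it on the event $\|\widehat{\boldsymbol{\theta}}-\boldsymbol{\theta}_0\|\le \delta$. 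Continuity of $\breve\rho$ at $\boldsymbol{\theta}_0$, together with $\widehat{\boldsymbol{\theta}}\to\boldsymbol{\theta}_0$ and $\widehat{\eta}\to\eta$ at rate $\sqrt m$ from (\ref{theta}), gives $\widehat r\breve\rho(\widehat{\boldsymbol{\theta}})\ge r\breve\rho(\boldsymbol{\theta}_0)-\varepsilon$ with probability tending to one. Hence the remainder is $O_{\mathbb{P}}(p^{r\breve\rho(\boldsymbol{\theta}_0)-\varepsilon})$, which is $o_{\mathbb{P}}(1/\sqrt m)$ by the last condition of Assumption \ref{asm7}.

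Dividing the two expansions, $\eta_p^*/\widehat{\eta}_p^*$ splits into three terms:
\begin{itemize}
\item the ratio $c_y(1,0;\widehat{\boldsymbol{\theta}})/c_y(1,0;\boldsymbol{\theta}_0)$, which is $1+O_{\mathbb{P}}(\|\widehat{\boldsymbol{\theta}}-\boldsymbol{\theta}_0\|)=1+O_{\mathbb{P}}(m^{-1/2})$ by the Lipschitz part of Assumption \ref{asm7}, $c_y(1,0;\boldsymbol{\theta}_0)>0$ (Assumption \ref{asm6}) and (\ref{theta});
\item the two remainder factors, each $1+o_{\mathbb{P}}(m^{-1/2})$;
\item the factor $p^{r-\widehat r}=\exp\bigl((1/\widehat{\eta}-1/\eta)\log p\bigr)$.
\end{itemize}
Multiplying by $\sqrt{k_1}/\log(k_2/np)$, the first two contributions vanish by the $1/\sqrt m$ part of Assumption \ref{asm8}.

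The main obstacle is the third factor. Since $1/\widehat{\eta}-1/\eta=O_{\mathbb{P}}(m^{-1/2})$, it equals $1+O_{\mathbb{P}}(|\log p|/\sqrt m)$, provided $|\log p|/\sqrt m\to0$. The required statement then becomes $\frac{\sqrt{k_1}}{\log(k_2/np)}\frac{|\log p|}{\sqrt m}\to0$. Assumption \ref{asm8} gives this only without the factor $|\log p|$, so I have to extract the extra control elsewhere. My first attempt is to use the last condition of Assumption \ref{asm7}: $\sqrt m\,p^{a}\to0$ with $a=r\breve\rho(\boldsymbol{\theta}_0)-\varepsilon>0$ gives $|\log p|\le (\log m)/(2a)$ eventually. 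This reduces the requirement to $\frac{\sqrt{k_1}\log m}{\sqrt m\log(k_2/np)}\to0$. That holds under the polynomial choices $k_i=n^{\kappa}$, $p=n^{\zeta}$ mentioned after Assumption \ref{asm8}, where $\log(k_2/np)\asymp\log n\asymp|\log p|$ and $\sqrt{k_1/m}$ decays polynomially. If a fully general argument is needed, I would note that $\log(k_2/np)\ge|\log p|-\log(n/k_2)$, and state explicitly the mild strengthening of Assumption \ref{asm8} in which $1/\sqrt m$ is replaced by $|\log p|/\sqrt m$. Combining the three contributions with Lemma \ref{lem2+} then yields the lemma.
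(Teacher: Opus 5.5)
Your route is the paper's. The paper also peels off $\eta_p/\eta_p^*$ with Lemma \ref{lem2+} and linearizes $c(1,\cdot\,;\boldsymbol{\theta})$ at $0$ via Assumption \ref{asm7} for both $\eta_p^*$ and $\widehat{\eta}_p^*$, using the $\varepsilon$-slack and continuity of $\breve\rho$ for $\breve\rho(\widehat{\boldsymbol{\theta}})$. It then controls $c_y(1,0;\widehat{\boldsymbol{\theta}})/c_y(1,0;\boldsymbol{\theta}_0)$ by the Lipschitz condition, (\ref{theta}) and the $1/\sqrt m$ part of Assumption \ref{asm8}.

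Where you go beyond the paper is the factor $p^{1/\widehat{\eta}-1/\eta}$. The paper asserts $\frac{\sqrt{k_1}}{\log(k_2/np)}\bigl(p^{2-1/\eta}/\widehat{\eta}_p^*-c_y(1,0;\widehat{\boldsymbol{\theta}})\bigr)\stackrel{\mathbb{P}}{\rightarrow}0$ ``similarly''. But (\ref{hat}) involves $p^{2-1/\widehat{\eta}}$, so this factor is silently dropped. You are right that it must be controlled. Since $\sqrt m(\widehat{\eta}-\eta)$ has a nondegenerate normal limit, its contribution is exactly of order $\frac{\sqrt{k_1}}{\log(k_2/np)}\cdot\frac{\log(1/p)}{\sqrt m}$, not smaller.

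Your attempt to get this rate from the stated assumptions fails, for three reasons.
\begin{enumerate}
\item The inequality is reversed. From $\sqrt m\,p^{a}\to0$ you get $a\log(1/p)-\frac12\log m\to\infty$, i.e.\ $\log(1/p)\ge(\log m)/(2a)$ eventually. That is a \emph{lower} bound on $|\log p|$, so the last condition of Assumption \ref{asm7} pushes the wrong way.
\item The polynomial check is wrong. Since $m=k_3(k_3/n)^{1/\eta-1}=o(k_3)$, taking $k_1=k_3=n^{\kappa}$ gives $k_1/m=(n/k_3)^{1/\eta-1}$, which grows polynomially. In fact that example violates the $1/\sqrt m$ part of Assumption \ref{asm8} itself. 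Your condition does hold for polynomial choices with $k_1/m\to0$ polynomially and $np/k_2\to0$ polynomially.
\item The condition does not follow in general. Take $p=k_2/(n\log n)$ and $k_1\asymp m/\log n$, with polynomial $k_2,k_3$ and suitable $A$, $q_1$, $\breve\rho$. Then all of Assumptions \ref{asm1}--\ref{asm8} can hold while $\frac{\sqrt{k_1}\log(1/p)}{\sqrt m\log(k_2/np)}\asymp\sqrt{\log n}/\log\log n\to\infty$, and the lemma's conclusion then fails.
\end{enumerate}

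So your fallback is not a cosmetic strengthening: the extra hypothesis $\frac{\sqrt{k_1}\log(1/p)}{\sqrt m\,\log(k_2/np)}\to0$ is genuinely needed. With it, your argument is a complete proof, and more careful than the paper's. It also relies on the local uniformity in $\boldsymbol{\theta}$ of the $O$-term in Assumption \ref{asm7}, which you flag and the paper uses tacitly.
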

  
  \noindent\textbf{Proof.}
  By Assumption \ref{asm7} and Lemma \ref{lem1}, we have that 
  $$
\frac{p^{2-1/\eta}}{\eta_{p}^*}=\frac{c\left(1, \eta_{p}^* ; \boldsymbol{\theta}_0\right) }{\eta_{p}^*}=c_y\left(1,0 ; \boldsymbol{\theta}_0\right) +O((\eta_{p}^*)^{\breve{\rho}(\boldsymbol{\theta}_0)})=c_y\left(1,0 ; \boldsymbol{\theta}_0\right) +O(p^{\breve{\rho}(\boldsymbol{\theta}_0)(2-1/\eta)}).
$$Under Assumption \ref{asm8}, we have $$\frac{\sqrt{k_1}}{\log (k_2/np)}\left(\frac{p^{2-1/\eta}}{\eta_{p}^*}-c_y\left(1,0 ; \boldsymbol{\theta}_0\right)\right)\to0.$$Similarly, given the fact that $\hat{\boldsymbol{\theta}}$ and $\hat{\eta}$ are both consistent estimators, $\breve{\rho}(\cdot)$ is a continuous function at $\boldsymbol{\theta}_0$ and $\sqrt{m} p^{\breve{\rho}(\boldsymbol{\theta}_0)(2-1/\eta)-\varepsilon}\rightarrow 0$ as $n \rightarrow \infty$, we have$$
\frac{\sqrt{k_1}}{\log (k_2/np)}\left(\frac{p^{2-1/\eta}}{\widehat{\eta}_p^*}-c_y\left(1,0 ;\widehat{ \boldsymbol{\theta}}\right) \right) \stackrel{\mathbb{P}}{\rightarrow} 0.$$Taking together, we get$$
\frac{\sqrt{k_1}}{\log (k_2/np)}\left(\frac{\eta_{p}^*}{\widehat{\eta}_p^*}-\frac{c_y\left(1,0 ;\widehat{ \boldsymbol{\theta}}\right)}{c_y\left(1,0 ; \boldsymbol{\theta}_0\right)} \right) \stackrel{\mathbb{P}}{\rightarrow} 0.$$ By 1-Liptschiz continuity and (\ref{theta}), we obtain $$|c_y\left(1,0 ;\widehat{ \boldsymbol{\theta}}\right)-c_y\left(1,0 ; \boldsymbol{\theta}_0\right)|\leq |\widehat{ \boldsymbol{\theta}}-\boldsymbol{\theta}_0|=O_{\mathbb{P}}(\frac{1}{\sqrt{m}}).$$ Since $\frac{\sqrt{k_1}}{\sqrt{m}\log \left(k_2 / n p\right)}\to 0$ and $c_y\left(1,0 ; \boldsymbol{\theta}_0\right)>0,$ it follows that $$
\frac{\sqrt{k_1}}{\log (k_2/np)}\left(\frac{c_y\left(1,0 ;\widehat{ \boldsymbol{\theta}}\right)}{c_y\left(1,0 ; \boldsymbol{\theta}_0\right)}-1 \right) \stackrel{\mathbb{P}}{\rightarrow} 0,$$which further implies that $$
\frac{\sqrt{k_1}}{\log (k_2/np)}\left(\frac{\eta_{p}^*}{\widehat{\eta}_p^*}-1\right)\stackrel{\mathbb{P}}{\rightarrow} 0.
$$Finally we apply Lemma \ref{lem2+} and conclude that as $n \rightarrow \infty$, $$\frac{\sqrt{k_1}}{\log (k_2/np)}\left(\frac{\eta_{p}}{\widehat{\eta}_p^*}-1\right)\stackrel{\mathbb{P}}{\rightarrow} 0.
      $$ $\hfill\qedsymbol$

\bigskip\noindent\textbf{Proof of Theorem \ref{thm2}.} Recall the decomposition (\ref{dec}) and note that $I_j \stackrel{\mathbb{P}}{\rightarrow}1,$ for $j=1,2,3,4$, as $n \rightarrow \infty$.  

For $I_1$, by Taylor expansion, Lemma \ref{lem3} and Assumption \ref{asm8}, we have \begin{align*}I_1&=\left(\frac{\widehat{\eta}_p^*}{\eta_{p}}\right)^{-\widehat{\gamma}}=1-\widehat{\gamma}\frac{\widehat{\eta}_p^*}{\eta_{p}} =1-(\gamma+O_{\mathbb{P}}(\frac{1}{\sqrt{k_1}}))o_{\mathbb{P}}\left(\frac{\log (k_2/np)}{\sqrt{k_1}}\right)\\&=1+o_{\mathbb{P}}\left(\frac{\log (k_2/np)}{\sqrt{k_1}}\right).\end{align*}

\noindent Similarly, we get
\begin{align*}I_2&=e^{(\gamma-\widehat{\gamma})\log(\eta_{p})}=1+(\gamma-\widehat{\gamma})\log(\eta_{p})+O_{\mathbb{P}}((\gamma-\widehat{\gamma})\log(\eta_{p}))\\&=1+O_{\mathbb{P}}(\frac{\log p}{\sqrt{k_1}})=1+o_{\mathbb{P}}(\frac{\log (k_2/np)}{\sqrt{k_1}}).\end{align*}
\indent For $I_3$, by Theorem 4.3.8 in \cite{de2007extreme}, we have
$$\min \left(\sqrt{k_2}, \frac{\sqrt{k_1}}{\log \left(k_2 / n p\right)}\right)\left(\frac{\widehat{\operatorname{VaR}}_Y(p)}{\operatorname{VaR}_Y(p)}-1\right)\stackrel{d}{\rightarrow} \mathcal{N}(\frac{\lambda}{1-\rho},\gamma^2).$$Together with Assumption \ref{asm8}, we have $$\frac{\sqrt{k_1}}{\log \left(k_2 / n p\right)}\left(I_3-1\right)\stackrel{d}{\rightarrow} \mathcal{N}(\frac{\lambda}{1-\rho},\gamma^2).$$

For $I_4$, by Assumption \ref{asm1}, \ref{asm2} and \ref{asm8}, we have
\begin{align*}I_4&=\frac{\left(\eta_{p}\right)^{-\gamma} \operatorname{VaR}_Y(p)}{\operatorname{VaR}_Y\left(p \eta_{p}\right)}=1+O(A(1/p))=1+o(A(n/k_2))\\&=1+o(1/\sqrt{k_2})=1+o(\frac{\log (k_2/np)}{\sqrt{k_1}}).\end{align*}

Finally we get $$ \frac{\sqrt{k_1}}{\log \left(k_2 / n p\right)}\left(\frac{\widehat{\operatorname{CoVaR}}_{Y \mid X}(p)}{\operatorname{CoVaR}_{Y \mid X}(p)}-1\right)\stackrel{d}{\rightarrow}\mathcal{N}(\frac{\lambda}{1-\rho},\gamma^2),$$ and the proof is complete.

$\hfill\qedsymbol$

\bibliographystyle{apalike}

\bibliography{ref}

@article{engle2018systemic,
  title={Systemic risk 10 years later},
  author={Engle, Robert},
  journal={Annual Review of Financial Economics},
  volume={10},
  number={1},
  pages={125--152},
  year={2018},
  publisher={Annual Reviews}
}

@article{he2019statistical,
  title={Statistical inference for a relative risk measure},
  author={He, Yi and Hou, Yanxi and Peng, Liang and Sheng, Jiliang},
  journal={Journal of Business \& Economic Statistics},
  volume={37},
  number={2},
  pages={301--311},
  year={2019},
  publisher={Taylor \& Francis}
}

@article{brownlees2017srisk,
  title={SRISK: A conditional capital shortfall measure of systemic risk},
  author={Brownlees, Christian and Engle, Robert F},
  journal={The Review of Financial Studies},
  volume={30},
  number={1},
  pages={48--79},
  year={2017},
  publisher={Oxford University Press}
}

@article{allen2012does,
  title={Does systemic risk in the financial sector predict future economic downturns?},
  author={Allen, Linda and Bali, Turan G and Tang, Yi},
  journal={The Review of Financial Studies},
  volume={25},
  number={10},
  pages={3000--3036},
  year={2012},
  publisher={Oxford University Press}
}

@incollection{hansen2013challenges,
  title={Challenges in identifying and measuring systemic risk},
  author={Hansen, Lars Peter},
  booktitle={Risk topography: Systemic risk and macro modeling},
  pages={15--30},
  year={2013},
  publisher={University of Chicago Press}
}

@article{jackson2021systemic,
  title={Systemic risk in financial networks: A survey},
  author={Jackson, Matthew O and Pernoud, Agathe},
  journal={Annual Review of Economics},
  volume={13},
  number={1},
  pages={171--202},
  year={2021},
  publisher={Annual Reviews}
}

@article{li2025properties,
  title={Properties of CoVaR based on tail expansions of copulas},
  author={Li, Xiaoting and Joe, Harry},
  journal={Journal of Multivariate Analysis},
  pages={105510},
  year={2025},
  publisher={Elsevier}
}

@article{mainik2014dependence,
  title={On dependence consistency of CoVaR and some other systemic risk measures},
  author={Mainik, Georg and Schaanning, Eric},
  journal={Statistics \& Risk Modeling},
  volume={31},
  number={1},
  pages={49--77},
  year={2014},
  publisher={De Gruyter}
}

@article{engle2002dynamic,
  title={Dynamic conditional correlation: A simple class of multivariate generalized autoregressive conditional heteroskedasticity models},
  author={Engle, Robert},
  journal={Journal of Business \& Economic statistics},
  volume={20},
  number={3},
  pages={339--350},
  year={2002},
  publisher={Taylor \& Francis}
}

@article{huang2024monte,
  title={Monte Carlo Estimation of CoVaR},
  author={Huang, Weihuan and Lin, Nifei and Hong, L Jeff},
  journal={Operations Research},
  volume={72},
  number={6},
  pages={2337--2357},
  year={2024},
  publisher={INFORMS}
}

@article{wang2023tail,
title={Tail Gini Functional under Asymptotic Independence},
author={Wang, Zhaowen and Chen, Liujun and Li, Deyuan},
journal={Statistica Sinica},
doi = {10.5705/ss.202023.0426},
  note = {forthcoming},
year={2026}
}

@article{bauwens2006multivariate,
  title={Multivariate GARCH models: a survey},
  author={Bauwens, Luc and Laurent, S{\'e}bastien and Rombouts, Jeroen VK},
  journal={Journal of Applied Econometrics},
  volume={21},
  number={1},
  pages={79--109},
  year={2006},
  publisher={Wiley Online Library}
}

@article{ledford1997modelling,
  title={Modelling dependence within joint tail regions},
  author={Ledford, Anthony W and Tawn, Jonathan A},
  journal={Journal of the Royal Statistical Society: Series B (Statistical Methodology)},
  volume={59},
  number={2},
  pages={475--499},
  year={1997},
  publisher={Wiley Online Library}
}

@article{nolde2020conditional,
  title={Conditional extremes in asymmetric financial markets},
  author={Nolde, Natalia and Zhang, Jinyuan},
  journal={Journal of Business \& Economic Statistics},
  volume={38},
  number={1},
  pages={201--213},
  year={2020},
  publisher={Taylor \& Francis}
}

@article{leng2024asymptotics,
  title={Asymptotics of CoVaR Inference In Two-Quantile-Regression},
  author={Leng, Xuan and He, Yi and Hou, Yanxi and Peng, Liang},
  journal={Available at SSRN 4816475},
  year={2024}
}

@article{girardi2013systemic,
  title={Systemic risk measurement: Multivariate GARCH estimation of CoVaR},
  author={Girardi, Giulio and Erg{\"u}n, A Tolga},
  journal={Journal of Banking \& Finance},
  volume={37},
  number={8},
  pages={3169--3180},
  year={2013},
  publisher={Elsevier}
}

@article{oh2018time,
  title={Time-varying systemic risk: Evidence from a dynamic copula model of cds spreads},
  author={Oh, Dong Hwan and Patton, Andrew J},
  journal={Journal of Business \& Economic Statistics},
  volume={36},
  number={2},
  pages={181--195},
  year={2018},
  publisher={Taylor \& Francis}
}

@article{bianchi2023non,
  title={Non-Gaussian models for CoVaR estimation},
  author={Bianchi, Michele Leonardo and De Luca, Giovanni and Rivieccio, Giorgia},
  journal={International Journal of Forecasting},
  volume={39},
  number={1},
  pages={391--404},
  year={2023},
  publisher={Elsevier}
}

@article{hill1975simple,
  title={A simple general approach to inference about the tail of a distribution},
  author={Hill, Bruce M},
  journal={The annals of statistics},
  pages={1163--1174},
  year={1975},
  publisher={JSTOR}
}

@article{einmahl2012m,
  author = {John H. J. Einmahl and Andrea Krajina and Johan Segers},
title = {{An M-estimator for tail dependence in arbitrary dimensions}},
volume = {40},
journal = {Annals of Statistics},
number = {3},
publisher = {Institute of Mathematical Statistics},
pages = {1764 -- 1793},
year = {2012},
doi = {10.1214/12-AOS1023},
URL = {https://doi.org/10.1214/12-AOS1023}
}

@article{lehtomaa2020asymptotic,
  title={Asymptotic independence and support detection techniques for heavy-tailed multivariate data},
  author={Lehtomaa, Jaakko and Resnick, Sidney I},
  journal={Insurance: Mathematics and Economics},
  volume={93},
  pages={262--277},
  year={2020},
  publisher={Elsevier}
}

@article{lalancette2021rank,
  title={Rank-based estimation under asymptotic dependence and independence, with applications to spatial extremes},
  author={Lalancette, Micha{\"e}l and Engelke, Sebastian and Volgushev, Stanislav},
  journal={Annals of Statistics},
  volume={49},
  number={5},
  pages={2552--2576},
  year={2021},
  publisher={Institute of Mathematical Statistics}
}

@article{ledford1996statistics,
  title={Statistics for near independence in multivariate extreme values},
  author={Ledford, Anthony W and Tawn, Jonathan A},
  journal={Biometrika},
  volume={83},
  number={1},
  pages={169--187},
  year={1996},
  publisher={Oxford University Press}
}

@article{nolde2022extreme,
  title={An extreme value approach to CoVaR estimation},
  author={Nolde, Natalia and Zhou, Chen and Zhou, Menglin},
  journal={arXiv preprint arXiv:2201.00892},
  year={2022}
}

@book{de2007extreme,
  title={Extreme Value Theory: An Introduction},
  author={de Haan, Laurens and Ferreira, Ana},
  year={2006},
  publisher={Springer}
}

@article{cai2020estimation,
  title={Estimation of the marginal expected shortfall under asymptotic independence},
  author={Cai, Juan-Juan and Musta, Eni},
  journal={Scandinavian Journal of Statistics},
  volume={47},
  number={1},
  pages={56--83},
  year={2020},
  publisher={Wiley Online Library}
}

@article{zhang2017random,
  title={Random threshold driven tail dependence measures with application to precipitation data analysis},
  author={Zhang, Zhengjun and Zhang, Chunming and Cui, Qiurong},
  journal={Statistica Sinica},
  volume={27},
  number={2},
  pages={685--709},
  year={2017},
  publisher={JSTOR}
}

@techreport{adrian2011covar,
  title={CoVaR},
  author={Adrian, Tobias and Brunnermeier, Markus K},
  year={2011},
  institution={National Bureau of Economic Research}
}

@article{acharya2017measuring,
    author = {Acharya, Viral V. and Pedersen, Lasse H. and Philippon, Thomas and Richardson, Matthew},
    title = "{Measuring Systemic Risk}",
    journal = {Review of Financial Studies},
    volume = {30},
    number = {1},
    pages = {2-47},
    year = {2017},
    month = {10}
}

@article{le2018dependence,
  title={Dependence properties of spatial rainfall extremes and areal reduction factors},
  author={Le, Phuong Dong and Davison, Anthony C and Engelke, Sebastian and Leonard, Michael and Westra, Seth},
  journal={Journal of Hydrology},
  volume={565},
  pages={711--719},
  year={2018},
  publisher={Elsevier}
}

@article{wadsworth2012dependence,
  title={Dependence modelling for spatial extremes},
  author={Wadsworth, Jennifer L and Tawn, Jonathan A},
  journal={Biometrika},
  volume={99},
  number={2},
  pages={253--272},
  year={2012},
  publisher={Oxford University Press}
}

@article{das2018risk,
  title={Risk contagion under regular variation and asymptotic tail independence},
  author={Das, Bikramjit and Fasen-Hartmann, Vicky},
  journal={Journal of Multivariate Analysis},
  volume={165},
  pages={194--215},
  year={2018},
  publisher={Elsevier}
}

@article{kulik2015heavy,
  title={Heavy tailed time series with extremal independence},
  author={Kulik, Rafa{\l} and Soulier, Philippe},
  journal={Extremes},
  volume={18},
  pages={273--299},
  year={2015},
  publisher={Springer}
}

@article{sun2022extreme,
  title={Extreme Behaviors of the Tail Gini-Type Variability Measures},
  author={Sun, Hongfang and Chen, Yu},
  journal={Probability in the Engineering and Informational Sciences},
  volume={37},
  number={4},
  pages={928--942},
  year={2023},
  publisher={Cambridge University Press}
}
\end{document}